\pdfoutput=1
\documentclass[letterpaper, 10 pt, conference]{ieeeconf}

\usepackage{todonotes}
\usepackage{bbm}

\usepackage{graphicx}
\usepackage{booktabs}
\usepackage{adjustbox}
\usepackage{wrapfig}
\usepackage{lipsum}
\usepackage{amsmath,amssymb,amsfonts}
\usepackage{siunitx}
\usepackage{mathtools}
\usepackage{xcolor}
\usepackage[colorlinks = true,
            linkcolor = blue,
            urlcolor  = blue,
            citecolor = blue,
            anchorcolor = blue]{hyperref}
\let\labelindent\relax
\usepackage[inline]{enumitem}
\usepackage[ruled,vlined,linesnumbered]{algorithm2e}
\usepackage[noend]{algpseudocode}
\usepackage{ifthen}
\usepackage{dsfont}
\usepackage[backend=biber,style=numeric-comp,sorting=none,maxbibnames=99]{biblatex}
\usepackage{balance}
\usepackage{dirtytalk}
\usepackage{afterpage}
\usepackage{times}
\usepackage{optidef}
\usepackage[capitalise]{cleveref}
\usepackage{enumitem}

\allowdisplaybreaks

\newcommand{\diag}{\text{diag}}
\newcommand{\E}{\mathbb{E}}

\newcommand{\M}{\mathcal{M}}
\newcommand{\N}{\mathbb{N}}

\newcommand{\Prob}{\mathbb{P}}

\newcommand{\ra}{\rightarrow}
\newcommand{\R}{\mathbb{R}}

\newtheorem{remark}{Remark}
\newtheorem{assumption}{Assumption}
\newtheorem{proposition}{Proposition}

\newtheorem{lemma}{Lemma}
\newtheorem{theorem}{Theorem}

\DeclarePairedDelimiterX\loro[1]\rbrack\lbrack{#1}
\DeclarePairedDelimiterX\lorc[1]\rbrack\rbrack{#1}
\DeclarePairedDelimiterX\lcro[1]\lbrack\lbrack{#1}
\DeclarePairedDelimiterX\lcrc[1]\lbrack\rbrack{#1}
\DeclarePairedDelimiterX\set[1]\lbrace\rbrace{#1}

\DeclarePairedDelimiterX\norm[1]\lVert\rVert{#1}

\newcommand{\mc}[1]{\mathcal{#1}}

\newcommand{\G}{\mathcal{G}}
\newcommand{\network}{\mathcal{N}}

\newcommand{\nodes}{V}
\newcommand{\edges}{E}
\newcommand{\paths}{\mathcal{P}}

\newcommand{\NE}{\text{NE}}

\newcommand{\boldlambda}{{\boldsymbol{\lambda}}}
\newcommand{\boldnu}{{\boldsymbol{\nu}}}
\newcommand{\boldalpha}{{\boldsymbol{\alpha}}}
\newcommand{\boldtau}{{\boldsymbol{\tau}}}
\newcommand{\boldbeta}{{\boldsymbol{\beta}}}
\newcommand{\boldeta}{{\boldsymbol{\eta}}}

\newcommand{\boldv}{\textbf{v}}
\newcommand{\boldf}{\boldsymbol{f}}
\newcommand{\tr}{\text{tr}}

\newcommand{\rangespace}{\mathcal{R}}
\newcommand{\nullspace}{\mathcal{N}}
\newcommand{\lagrangian}{\mathcal{L}}

\newcommand{\tollSet}{\mathcal{T}}
\newcommand{\tollSetInterior}{\tilde{\mathcal{T}}}

\newcommand{\routingMatrix}{R}
\newcommand{\flow}{f}
\newcommand{\flowPath}{\hat{\flow}}
\newcommand{\flowVec}{\boldf}
\newcommand{\flowVecPath}{\hat{\flowVec}}
\newcommand{\demand}{\eta}
\newcommand{\demandVec}{\boldeta}
\newcommand{\setFeasibleFlows}{F}
\newcommand{\setFeasibleFlowsPath}{\hat{\setFeasibleFlows}}
\newcommand{\setNashFlows}{\setFeasibleFlows^\NE}
\newcommand{\flowNash}{\flow^\NE}

\newcommand{\flowVecNash}{\flowVec^\NE}
\newcommand{\flowVecNashPath}{\flowVecPath^\NE}
\newcommand{\latency}{\ell}
\newcommand{\latencyVec}{\boldsymbol{\latency}}
\newcommand{\latencySlope}{\beta}
\newcommand{\latencySlopeVec}{\boldbeta}
\newcommand{\latencySlopeMatrix}{B}
\newcommand{\slantProjMatrix}{\Gamma}
\newcommand{\disturb}{\alpha}
\newcommand{\disturbVec}{\boldalpha}
\newcommand{\toll}{\tau}
\newcommand{\tollVec}{\boldtau}

\newcommand{\gelbrichBall}{\text{GB}}

\newcommand{\gameCongestion}{\G}
\newcommand{\rosenthalPotential}{\Phi}
\newcommand{\latencyWorstEq}{g}
\newcommand{\dualIneq}{\lambda}
\newcommand{\dualIneqVec}{\boldlambda}
\newcommand{\dualEq}{\nu}
\newcommand{\dualEqVec}{\boldnu}
\newcommand{\meanDisturb}{\boldsymbol{\theta}}
\newcommand{\covDisturb}{\Sigma}
\newcommand{\meanDisturbNominal}{\hat{\meanDisturb}}
\newcommand{\covDisturbNominal}{\hat \covDisturb}

\newcommand{\latencyWorstSlopeWrtDisturb}{q}
\newcommand{\latencyWorstOffsetWrtDisturb}{q_0}
\newcommand{\disturbDistr}{\mu}
\newcommand{\disturbDistrNominal}{\hat\disturbDistr}
\newcommand{\disturbRange}{\delta}
\newcommand{\disturbDistrDeviation}{\epsilon}
\newcommand{\disturbDistrDeviationMax}{\epsilon_{\max}}
\newcommand{\optTollSet}{S_\tollVec^\star}

\newcommand{\latencyData}{y}
\newcommand{\latencyDataVec}{\boldsymbol{\latencyData}}

\newcommand{\systemLatency}{L}

\newcommand{\onesVector}{\textbf{1}}

\newcommand{\paren}[1]{{({#1})}}

\newcommand{\eventDisburbanceBound}{\mathcal{E}}

\bibliography{refs}

\IEEEoverridecommandlockouts

\title{
Distributionally Robust Tolls for Traffic Networks with Affine Latency Functions
}

\usepackage{pgf}

\pgfmathsetseed{\number\pdfrandomseed}
\pgfmathrandominteger{\rand}{0}{1}

\iftrue
    \author{
Chih-Yuan Chiu$^1$, 
Sarah H. Q. Li$^{2\star}$,
and Bryce L. Ferguson$^{3\star}$
\thanks{${}^\star$Equal senior authorship.}
\thanks{$^{1}$Chih-Yuan Chiu is with the School of 
ECE, Georgia Tech,
Atlanta, GA, USA (\texttt{cyc at gatech dot edu}).}
\thanks{$^{2}$Sarah H.Q. Li is with the School of 
AE, Georgia Tech,
GA, USA (\texttt{sarahli at gatech dot edu})}
\thanks{$^{3}$Bryce L. Ferguson is with the Thayer School of Engineering at Dartmouth College, NH 03755, USA (\texttt{Bryce.L.Ferguson at dartmouth dot edu}).}
}
\else
    \author{
Chih-Yuan Chiu$^1$, 
Bryce L. Ferguson$^2$,
and Sarah H. Q. Li$^3$
\thanks{$^{1}$Chih-Yuan Chiu is with the School of Electrical and Computer Engineering, Georgia Institute of Technology, Atlanta, GA, USA (\texttt{cyc at gatech dot edu}).}
\thanks{$^{2}$Bryce L. Ferguson is with the Thayer School of Engineering at Dartmouth College, NH 03755, USA (\texttt{Bryce.L.Ferguson at dartmouth dot edu}).}
\thanks{$^{3}$Sarah H.Q. Li is with the School of Aerospace Engineering, Georgia Institute of Technology, Atlanta, GA, USA (\texttt{sarahli at gatech dot edu})}
}
\fi

\def\BibTeX{{\rm B\kern-.05em{\sc i\kern-.025em b}\kern-.08em
T\kern-.1667em\lower.7ex\hbox{E}\kern-.125emX}}

\def\BibTeX{{\rm B\kern-.05em{\sc i\kern-.025em b}\kern-.08em
T\kern-.1667em\lower.7ex\hbox{E}\kern-.125emX}}

\begin{document}

\maketitle

\thispagestyle{empty}
\pagestyle{empty}




\begin{abstract}


In network congestion games, system operators often utilize latency models, estimated from real-world traffic flow and travel time data, to design monetary incentives which steer equilibrium user behaviors towards 
lowering system-wide latency.
This work studies the impact of 
latency model uncertainty
when designing incentives in 
non-atomic network congestion games.
Our 
approach leverages
distributionally robust optimization (DRO), 
which captures data-driven uncertainty in latency models by considering worst-case distribution shifts.
We prove that, under mild and practically relevant assumptions, the distributionally robust tolling problem in single origin-destination, affine-latency congestion games can be solved via convex programming.
Numerical simulations illustrate 
that tolls designed to be distributionally robust against unknown disturbances can outperform tolls designed using fixed, nominal disturbance models in minimizing system-wide latency.
\end{abstract}




\section{Introduction}
\label{sec: Introduction}

The efficient operation of cyber-physical systems is critical to a functional modern society, and is increasingly regulated via optimization-driven pricing and tolls, such as congestion-based  pricing or marginal-pricing over traffic \cite{Paccagnan2019IncentivizingEfficientUse, Ratliff2018, Roughgarden2010AlgorithmicGameTheory, Yang1998PrincipleOfMarginalCostPricing,li2023adaptive}, communication \cite{Niyato2008}, and power \cite{Tang2019} systems.
Such pricing schemes often rely upon congestion game-based models to capture the impact of strategic users' resource usage on network congestion levels, and vice versa.
Unfortunately, the interplay between users' resource consumption and congestion levels varies with fluctuating demand and exogenous disturbances, and can only be imperfectly estimated through limited, noise-corrupted data, such as historical traffic counts, travel time measurements, and GPS traces \cite{Kim2024EstimateThenPredict}.
Even when analytical structures are imposed to describe congestion behavior, their parameters reflect empirical measurements and are subject to estimation error and variability. 
As a result, toll design in networked congestion games is inherently subjected to data-driven and time-varying uncertainty.

Despite the mismatch between the predictions of empirically estimated models and real world system outcomes, most toll design frameworks still hinge upon the availability of a known, static congestion model \cite{Paccagnan2021OptimalTaxesinAtomicCongestionGames, Yang1998PrincipleOfMarginalCostPricing, Roughgarden2010AlgorithmicGameTheory}.
Unfortunately, the performance of toll designs that fail to account for system parameter variability may degrade under significant parameter fluctuations
\cite{ChiuFerguson2025RobustnessOfIncentiveMechanisms}.
To address this issue, this work presents a framework for designing \emph{distributionally robust tolls} over network congestion games that optimize the system-wide latency under bounded distributional uncertainty in latency attributes.
Our framework enables incentive designers to incorporate considerations of distributional uncertainty when designing pricing mechanisms in real-world transportation contexts.
For instance, by modeling disturbances over edge latencies as random variables with data-driven probability distributions 
that capture weather fluctuations, a toll designer can design congestion pricing policies that account for worst-case road conditions.
While prior work \cite{ChiuFerguson2025RobustnessOfIncentiveMechanisms} studied the robustness of toll designs under \textit{deterministic} latency model errors, we consider misspecifications of \textit{stochastic} latency attributes stemming from distributional shifts. 
Leveraging distributionally robust optimization (DRO) 
\cite{Kuhn2019WassersteinDistributionallyRobustOptimizationTheoryApplicationsML}, we derive tolls that minimize expected latency under the worst-case disturbance distribution near a nominal prior. 
To our knowledge, our work is the first to adopt DRO techniques for incentive design in congestion games, although similar methods have been applied to study auction pricing \cite{suzdaltsev_distributionally_2022}, computing stochastic traffic equilibria \cite{Ahipasaoglu2019DistributionallyRobustMTE,marecek_distributionally_2017}, and non-game-theoretic toll design \cite{dokka_robust_2017}.

Our contribution is threefold. First, we formulate a variant of the toll design problem over network 
congestion
games which aims to minimize the worst-case system latency under equilibrium flows while considering distributional uncertainty over network edge latencies learned from imprecise traffic data.
Concretely, we leverage Wasserstein-based and Gelbrich hull-based uncertainty sets to capture ambiguity in the empirical disturbance distribution without reference to any specific parametric family of distributions.
Second, under standard assumptions on the latency functions, we present an analytic expression of the worst-case equilibrium system latency as a function of tolls and uncertain congestion parameters, which enables us to recast our distributionally robust toll design problem as a convex program.
Third,
our formulation reveals how statistical variability in the congestion parameters and structural properties of the latency model jointly influence both representability and solvability of the design problem.
Together, these results provide a principled framework for computing tolls that incentivize Nash equilibria with improved system-level performance while remaining robust to distributional, data-driven uncertainty.

\paragraph{Notation} 
For any positive integers $m, n \in \N$, we define $[n] := \{1, \cdots, n\}$,
and 
$\R_{\geq 0}^n := \{x \in \R^n: x_i \geq 0 \ \forall i \in [n] \}$, and $O_{m \times n} :=$ the $m \times n$ zero matrix, with indices omitted when clear from context.
For any set $S$, let $|S|$ and $2^S$ respectively denote the cardinality and power set of $S$.
Define $\onesVector_n := (1, 1, \cdots, 1) \in \R^n$. 
Given $\boldv = (v_1, \cdots, v_n) \in \R^n$, we define $\diag\{\boldv\} \in \R^{n \times n}$ to be the diagonal matrix whose $i$-th entry is $v_i$ for each $i \in [n]$. 
Given a positive semidefinite matrix $P$, we denote the unique positive semidefinite square root of $P$ by $P^{1/2}$, i.e., $(P^{1/2})^2 = P$, and we write $P^{-1/2} = (P^{1/2})^{-1}$.
We define $\mathds{1}\{\cdot\}$ to be the indicator function, which returns 1 when the input argument is true, and 0 otherwise.
Let $\mc{P}(\Omega)$ denote the set of all continuous, non-negative, and Lebesgue integrable probability density functions on $\Omega$ and $\mc{X}_{\Omega}$ to denote the set of corresponding random variables.


\section{Problem Setup}
\label{sec: Preliminaries}


\subsection{Network Structure}
\label{subsec: Network Structure}


Consider a traffic network $\network = (\nodes, \edges)$ consisting of $n_\nodes$ nodes and $n_\edges$ edges, 
with
$\nodes = [n_\nodes]$ and $\edges = [n_\edges]$; here, $\edges$ describes a set of directed edges between nodes. For each node $i \in \nodes$, let $\edges_i^-$ and $\edges_i^+$ respectively denote the sets of incoming and outgoing edges at node $i$. For each edge $e \in \edges$, let $i_e \in \nodes$ and $j_e \in \nodes$ respectively denote the start and end nodes of edge $e$. 
In our work, we assume that the network $\network$ is \textit{single-origin single-destination,} i.e., there exists unique $s, d \in \nodes$ such that $\edges_s^- = \edges_d^+ = \emptyset$, where $\emptyset$ denotes the empty set. 
Without loss of generality, we assume that $s = 1$ and $d = n_V$, i.e., the source and destination nodes are respectively indexed first and last in the node set $\nodes$.
We also
assume that $\network$ is \textit{acyclic,} i.e., there exists no finite subset of edges $\{e_1, \cdots, e_n\} \subseteq \nodes$ such that $j_{e_k} = i_{e_{k+1}}$ for each $k \in [n-1]$ and $i_{e_1} = j_{e_n}$. Next, we call an ordered subset of edges $P := \{e_1, \cdots, e_n\} \subseteq E$ a path through the network $\network$ if $i_{e_1} = s$, $j_{e_n} = d$, and $j_{e_k} = i_{p_{k-1}}$ for each $k \in [n-1]$, and let $\paths \in 2^{|\edges|}$ denote the set of all paths.

We assume that the network $\network$ is traversed by a homogeneous user population with demand $\demand_s > 0$, thus generating a set $\setFeasibleFlowsPath$ of \textit{feasible path flows} 
given by:
\begin{align}
    \setFeasibleFlowsPath &:= \Bigg\{ \flowVecPath \in \R_{\geq 0}^{|\paths|}: \hspace{5mm} \sum_{p \in \paths} \flow_P = \demand_s \Bigg\}.
\end{align}
Each feasible path flow $\flowVecPath \in \setFeasibleFlowsPath$ then induces a corresponding edge flow $\flowVec \in \R^{|\edges|}$, defined coordinate-wise by $\flow_e := \sum_{p: e \in p} \flowPath_P$ for each $e \in \edges$, which is contained in the set $\setFeasibleFlows$ of \textit{feasible (edge) flows} given by:
\begin{align} \label{Eqn: Feasible Flows, def}
    \setFeasibleFlows &:= \Bigg\{ \flowVec \in \R_{\geq 0}^{|\edges|}: \hspace{3mm} \sum_{e \in \edges_i^+} \flow_e = \sum_{e \in \edges_i^-} \flow_e, \ \forall i \in \nodes \backslash \{s\}, \\ \nonumber
    &\hspace{2.7cm} \sum_{e \in \edges_s^+} \flow_e = \demand_s \Bigg\}.
\end{align}
Let $\routingMatrix \in \R^{|\nodes| - 1} \times |\edges|$ and $\demandVec \in \R^{|\nodes|-1}$ be defined component-wise as follows. For each $i \in \nodes \backslash \{d\}$ and $e \in \edges$:
\begin{align}
    \routingMatrix_{ie} &:= \mathds{1}\{e \in \edges_i^+\} - \mathds{1}\{e \in \edges_i^-\}, \\
    \demand_i &:= \demand_s \cdot \mathds{1}\{i = s\}. 
\end{align}
Then the feasible (edge) flow set \eqref{Eqn: Feasible Flows, def} can be written as:
\begin{align}
    \setFeasibleFlows = \{\flowVec \in \R_{\geq 0}^{|\edges|}: \routingMatrix \flowVec = \demandVec. \}
\end{align}

\subsection{Selfish Routing and Incentive Design}
We associate each edge $e \in E$ with a convex, strictly positive, and strictly increasing \textit{latency function} $\latency_e: \R_{\geq 0} \ra \R_{\geq 0}$, as well as a real-valued disturbance random variable $\disturb_e$ which captures random latency fluctuations.
As a mechanism to alter the perceived cost of each edge $e \in E$, a non-negative toll value $\toll_e \geq 0$ can be deployed by a system designer.
We define $\disturbVec := (\disturb_e: e \in E)$ (taking values in $\R^{|\edges|}$) and $\tollVec := (\toll_e: e \in E) \in \R_{\geq 0}^{|\edges|}$ for short.
We assume that users respond to 
the disturbance $\disturbVec$, whereas the toll designer only possesses a belief $\disturbDistr$ of the disturbance distribution.
For a given realization of $\disturbVec$, the resulting congestion game formulated using the network structure, latency, toll, and disturbance model specified above is denoted by $\gameCongestion(\disturbVec, \tollVec)$.

Next, we formulate the objectives, and analyze the decisions, of the user population and the traffic controller in a network congestion game with tolls $\gameCongestion(\disturbVec, \tollVec)$. 
During route selection,
we assume each user seeks to minimize the sum of their observed travel latency and monetary toll costs.
Nash equilibrium flow concepts capture the collective behavior emerging from these selfish decisions
\cite{Sandholm2010PopulationGamesAndEvolutionaryDynamics, Roughgarden2010AlgorithmicGameTheory}.
Concretely, we call a path flow $\flowVecNashPath \in \setFeasibleFlowsPath$ a \textit{Nash equilibrium path flow} of $\gameCongestion(\disturbVec, \tollVec)$,
with a corresponding (\textit{Nash equilibrium}) \textit{edge flow} $\flowVecNash \in \setFeasibleFlows$,
if each utilized path is of minimal cost under $\flowVecNashPath$, i.e., for each $P \in \paths$, we have $\flowPath_P > 0$ only if:
\begin{align}
\label{Eqn: Nash Equilibrium (path) flow, Def}
    P &\in \arg\min_{P' \in \paths} \sum_{e \in P'} \big[ \latency_e(\flowNash_e) + \toll_e \big].
\end{align}
We denote by $\setNashFlows(\disturbVec, \tollVec) \subseteq \R^{|\edges|}$ the set of all Nash equilibrium edge flows corresponding to the game $\gameCongestion(\disturbVec, \tollVec)$. If the game $\gameCongestion(\disturbVec, \tollVec)$ admits a unique Nash equilibrium edge flow, 
we write its Nash equilibrium edge flow by $\flowVecNash(\disturbVec, \tollVec)$.
From this point onward, \say{flow} will refer to edge flow (rather than path flow) unless otherwise stated.

It is well-known that the set of Nash equilibrium flows for a given congestion game with tolls $\gameCongestion(\disturbVec, \tollVec)$, can be characterized via convex programming \cite{Rosenthal1973ClassofGames}. Concretely, we define the \textit{Rosenthal potential function} 
$\rosenthalPotential(\cdot; \disturbVec, \tollVec): \R_{\geq 0}^{|\edges|} \ra \R$ of the game $\gameCongestion(\disturbVec, \tollVec)$ by:
\begin{align}
    \label{Eqn: Rosenthal Potential Function}
    \rosenthalPotential(\flowVec; \disturbVec, \tollVec) &:= \sum_{e \in E} \int_0^{\flow_e} \big[\latency_e(w) + \toll_e + \disturb_e \big] dw.
\end{align}
We then have:
\begin{align}
\label{Eqn: Nash Computation via Rosenthal Potential Function}
    \setNashFlows(\disturbVec, \tollVec) &= \arg \min_{\flowVec \in \setFeasibleFlows} \rosenthalPotential(\flowVec; \disturbVec, \tollVec).
\end{align}
%

Given flows $\flowVec$ and disturbances $\disturbVec$, we define the system-wide latency $\systemLatency: \R^{2|\edges|} \ra \R$, which describes aggregate user costs, as:
\begin{align}
    \systemLatency(\flowVec; \disturbVec) := \sum_{e \in E} \flow_e \big[\latency_e(\flow_e) + \disturb_e \big].
\end{align}
A system operator who observes the disturbance $\disturbVec$ perfectly would aim to designs tolls $\tollVec$ minimize the worst-case system-wide latency at equilibrium, i.e.,
\begin{align} \label{Eqn: (g) Worst-case System Latency at Equilibrium, def}
    \latencyWorstEq(\disturbVec, \tollVec) &:= \max_{\flowVec \in \setNashFlows(\disturbVec, \tollVec)} \systemLatency(\flowVec; \disturbVec).
\end{align}

However, as mentioned above, we assume that the system designer cannot adapt its designed toll to the precise 
realizations of
$\disturbVec$, but must design tolls based only on 
a belief distribution $\disturbDistr$ over $\disturbVec$.
Concretely, in the \textit{nominal toll design problem}, the system designer aims to design tolls $\tollVec$ to minimize the worst-case expected total latency at equilibrium:
\begin{align} \label{Eqn: Optimal nominal toll in preliminaries}
    \text{arg}\min_{\tollVec \in \R_{\geq 0}^{|\edges|}} \E_{\disturbVec \sim \disturbDistr}[\latencyWorstEq(\disturbVec, \tollVec)].
\end{align}
In this work, we are interested in the setting where the system designer's belief $\disturbDistr$ is inaccurate and may not match the true distribution of $\disturbVec$, possibly caused by approximation error or insufficient data for empirical estimation.
We thus seek to solve a new toll design problem where the resulting tolls must be robust to this form of distributional misspecification.

\subsection{Modeling Congestion in Latency via Empirical Data}
We present tractable models from the distributional robustness literature that enable toll designers to explicitly account for possible discrepancies between their empirical model of the disturbance distribution 
and the true distribution.
We make the following assumptions: 
\begin{enumerate}
    \item The toll designer possesses \textit{a priori} knowledge of the latency map $\latency_e(\cdot)$ for each edge $e \in \edges$, 
    obtained through analytical models, simulations, or 
    data.
    Below, we use the notation $\latencyVec: \R^{|\edges|} \ra \R^{|\edges|}$, defined by $\latencyVec(\flowVec) := \big( \latency_e(\flow_e): e \in \edges \big)$ for each $\flowVec \in \R^{|\edges|}$.
    \item 
    The toll designer possesses a set $\mc{D} = \{(\flowVec^\paren{i}, \latencyDataVec^\paren{i}) \in \R^{2|\edges|}: i \in [N] \}$ of observed flow-latency data pairs, collected across network edges $\edges$.
    \item The toll designer is aware that
    each flow-latency data pair in $\mc{D}$ arises from the following probabilistic model:
    \begin{align}
        \latencyDataVec^\paren{i} &= \latencyVec(\flowVec^\paren{i}) + \disturbVec^\paren{i}.
    \end{align}
    Above, $\{\disturbVec^\paren{i}: i \in [N] \} := \mc{D}_\disturb$ are 
    i.i.d. samples from a ground truth disturbance distribution $\disturbDistr_{\text{true}}$ unknown to the toll designer, who constructs an empirical distribution $\disturbDistrNominal$ from $\mc{D}_\disturb$ to approximate $\disturbDistr_{\text{true}}$.
    We henceforth refer to $\disturbDistrNominal$ as the \textit{nominal} disturbance distribution.
\end{enumerate}
The empirical mean and covariance associated with $\disturbDistrNominal$, which we denote by $\meanDisturbNominal$ and $\covDisturbNominal$, can then be computed as follows:
\begin{align}
    \meanDisturbNominal &= \frac{1}{N}\sum_{i} \big[ \latencyDataVec^i - \latencyVec(\flow^i) \big] \in \R^{|\edges|}, \\
    \covDisturbNominal &= \frac{1}{N}\sum_{i} (\latencyDataVec^i - \meanDisturbNominal)(\latencyDataVec^i - \meanDisturbNominal)^\top \in \R^{|\edges|\times |\edges|}.
\end{align}
To facilitate traffic flow prediction or toll design, the empirical distribution $\disturbDistrNominal$ is often approximated via best fit to a parameterized class of distributions, e.g., Gaussians.
This approach provides an analytical description of $\disturbDistrNominal$, but does not account for any distributional misspecification arising from a pre-chosen model class, or finite sample uncertainty induced by the limited data in $\mc{D}$. 
Thus, instead of selecting a single distribution from a parametric class,
we instead leverage the \textit{2-Wasserstein metric} $W_2(\cdot, \cdot)$ over probability distributions to consider the \textit{non-parametric} collection of all distributions in the vicinity of $\disturbDistrNominal$. 
Concretely, consider the following $\epsilon$-\textit{Wasserstein ball} of \textit{all} probability distributions $\mu_{\disturb} \in \mc{P}(\R^{|\edges|})$ 
that are $\epsilon$-close to $\disturbDistrNominal$ in the Wasserstein sense:
\begin{equation}\label{eqn:Wasser_ball}
  \latencySlopeMatrix_{\disturbDistrDeviation}(\disturbDistrNominal) = \{\mu_{\disturb} \in \mc{P}(\R^{|\edges|}) | W_2(\mu_{\disturb}, \disturbDistrNominal) \leq \disturbDistrDeviation\}. 
\end{equation}
The Wasserstein ball $\latencySlopeMatrix_{\disturbDistrDeviation}(\disturbDistrNominal)$ extends the empirical distribution $\disturbDistrNominal$ to a set of probability distributions that promote robustness against the following uncertainty sources: 
\begin{enumerate}
    \item \textbf{Finite sample uncertainty}: $B_\epsilon(\disturbDistrNominal)$ accounts for errors induced by the limited data in $\mc{D}$;
    \item \textbf{Model misspecification}:  
    $B_\epsilon(\disturbDistrNominal)$ includes all distributions near the nominal disturbance distribution, including 
    skewed, or heavy-tail distributions. 
    \item \textbf{Joint dependence across edges}: $B_\epsilon(\disturbDistrNominal)$ captures disturbance correlations across network edges. 
    \item \textbf{Latency model uncertainty}:
    If the worst-case mismatch between the latency model $\latency_e(\cdot)$ and the real-world system latency is uniformly bounded, this mismatch may be captured by suitably inflating $B_\epsilon(\disturbDistrNominal)$.
    
\end{enumerate}
In its current form, $B_\epsilon(\disturbDistrNominal)$ is not computationally tractable
to compute.
Instead, we consider its outer set approximation by the Gelbrich ball, given by:
\begin{align} \nonumber
    &\gelbrichBall\disturbDistrDeviation(\meanDisturbNominal, \covDisturbNominal) \\ \nonumber
    = &\{(\meanDisturb, \covDisturb) | \Vert \meanDisturb - \hat \meanDisturb \Vert_2^2 + tr(\covDisturb + \hat \covDisturb -2 (\hat \covDisturb^{1/2} \covDisturb \hat \covDisturb^{1/2})^{1/2} )\leq \disturbDistrDeviation^2\}.
\end{align}
From \cite[Thm.13]{Kuhn2019WassersteinDistributionallyRobustOptimizationTheoryApplicationsML}, 
$B_\epsilon(\disturbDistrNominal) \subseteq \gelbrichBall\disturbDistrDeviation(\meanDisturbNominal,\covDisturbNominal)$
with set equality if $\disturbDistrNominal$ is a multi-variable Gaussian distribution on $\R^{|\edges|}$. Additionally, the set $\gelbrichBall\disturbDistrDeviation(\meanDisturbNominal, \covDisturbNominal)$ 
is convex over $\R^{|\edges|} \times \R^{|\edges| \times |\edges|}$, and thus facilitates the construction of convex programs that encode distributional robustness
\cite{Taskesen2023DistributionallyRobustLQR}. Thus, to capture distributional uncertainty,
we perform toll optimization over the worst-case disturbance distribution $\disturbDistr$ in
the Gelbrich ball  $\gelbrichBall\disturbDistrDeviation(\meanDisturbNominal, \covDisturbNominal)$,
as described below in \eqref{Eqn: (Ideal) Problem statement} in Sec. \ref{subsec: A Mathematical Formulation for Distributionally Robust Toll Design}. 

\section{Preliminaries and Simplifying Assumptions}
\label{sec: Preliminaries and Simplifying Assumptions}

In Sec. \ref{subsec: A Mathematical Formulation for Distributionally Robust Toll Design}, we formulate the distributionally robust toll design problem studied in our work, and describe the challenges and technical issues inherent in the direct computation of distributionally robust tolls. 
Motivated by these challenges,
in Sec. \ref{subsec: Key Assumptions and Consequences}, we present a set of practically relevant conditions on the latency functions (Assumption \ref{Assump: Linear Latency Functions}) over network edges $\edges$.
We also present a condition ensuring the existence of tolls which guarantee that all edges in $\network$ are utilized at equilibrium despite worst-case disturbances (Assumption \ref{Assump: Toll Bounds for Ensuring Full Network Utilization}).
Finally, in Sec. \ref{subsec: Problem Statement}, we utilize Assumptions \ref{Assump: Linear Latency Functions}-\ref{Assump: Toll Bounds for Ensuring Full Network Utilization} to concretely formulate 
our problem statement
within our distributionally robust toll design framework.

\subsection{A Mathematical Formulation for Distributionally Robust Toll Design}
\label{subsec: A Mathematical Formulation for Distributionally Robust Toll Design}

We aim
to compute tolls that minimize the worst-case expected system-wide latency
at equilibrium,
when the disturbance $\disturbVec$ is generated adversarially from a distribution $\disturbDistr$ within an $\disturbDistrDeviation$-Gelbrich ball of a nominal distribution $\disturbDistrNominal$:
\begin{align} \label{Eqn: (Ideal) Problem statement}
    \text{arg}\min_{\tollVec \in \R_{\geq 0}^{|\edges|}} \max_{\disturbDistr \in \gelbrichBall_\disturbDistrDeviation(\disturbDistrNominal)} \E_{\disturbVec \sim \disturbDistr}[\latencyWorstEq(\disturbVec, \tollVec)].
\end{align}
When $\disturbDistrDeviation = 0$, the solution to the distributionally robust toll design problem is equivalent to the nominal toll design problem,
which does not account for error in modeling $\disturbVec$.

A traffic controller who aims to compute distributionally robust tolls by solving \eqref{Eqn: (Ideal) Problem statement} faces several challenges and limitations. First, a congestion game may in general yield many Nash equilibrium flows, thus rendering 
the worst-case system-wide latency $\latencyWorstEq(\cdot, \cdot)$
in \eqref{Eqn: (Ideal) Problem statement}
challenging to compute and analyze.
Second, it is unclear whether the outer optimization objective in \eqref{Eqn: (Ideal) Problem statement}, i.e., $\max_{\disturbDistr \in \gelbrichBall_\disturbDistrDeviation(\disturbDistrNominal)} \E_{\disturbVec \in \disturbDistr}\big[ \latencyWorstEq(\disturbVec, \tollVec) \big]$,
would be a convex function of the toll 
$\tollVec$.
Finally, the equilibrium flow which globally optimizes \eqref{Eqn: (Ideal) Problem statement}, even when tractable to compute, may not fully utilize all network edges, a key concern for toll designers who aim to promote the efficient use of traffic infrastructure through congestion pricing.
In response to these challenges, we present a set of practically relevant assumptions which aid in the construction of our problem statement \eqref{Eqn: Problem Statement}, as presented in Sec. \ref{subsec: Problem Statement}, from \eqref{Eqn: (Ideal) Problem statement}.

\subsection{Key Assumptions and Consequences}
\label{subsec: Key Assumptions and Consequences}


To formulate a more tractable and practically relevant version of the distributionally robust toll design objective \eqref{Eqn: (Ideal) Problem statement}, we present the following assumptions in our theoretical derivations.
First, we assume that the traffic network $\network$ considered in our work has strictly increasing linear latency functions\footnote{Our formulation extends readily to settings with affine latency functions by shifting the mean of the disturbance distribution.}. We note that the use of affine or piecewise affine latency functions in traffic network models, to facilitate theoretical or empirical analysis, is common in the congestion pricing literature \cite{Gollapudi2023OnlineLearningforTrafficNavigationinCongestedNetworks, YueFerguson2021IncentiveDesignforCongestionGameswithUnincentivizableUsers, BrownMarden2017RobustnessofMarginalCostTaxesinAffineCongestionGames}.


\begin{assumption}[\textbf{Linear Latency Functions}]
\label{Assump: Linear Latency Functions}
For each $e \in \edges$, there exists $\latencySlope_e > 0$ such that $\latency_e(\flow_e) = \latencySlope_e \flow_e$ for all $\flow_e \geq 0$.
\end{assumption}

Although Assumption \ref{Assump: Linear Latency Functions} requires each latency map $\latency_e(\cdot)$ to be linear rather than affine, our results naturally extend to the affine latency setting by suitably adjusting the mean of the disturbance distribution $\disturbDistr$.


The stipulation $\latencySlope_e > 0$ in Assumption \ref{Assump: Linear Latency Functions} renders the Rosenthal potential function \eqref{Eqn: Rosenthal Potential Function} strongly convex. Subsequently, \eqref{Eqn: Nash Computation via Rosenthal Potential Function} then implies that the game $\gameCongestion(\disturbVec, \tollVec)$ always admits a unique Nash equilibrium flow $\flowVecNash(\disturbVec, \tollVec)$.
For brevity, we write $\latencySlopeVec := (\latencySlope_e: e \in \edges) \in \R^{|\edges|}$ and $\latencySlopeMatrix := \diag\{\latencySlopeVec\}$.

Below, in Prop. \ref{Prop: KKT matrix with B and R is invertible, and has psd upper left block} and \ref{Prop: Null space of KKT matrix psd upper left block}, we present some useful relations that between $\latencySlopeMatrix$ and $\routingMatrix$ that will facilitate the proofs of our main results (Thms. \ref{Thm: (g) Characterization of Equilibrium System Latency} and \ref{Thm: Optimal Toll for Distributional Robustness}).

\begin{proposition} \label{Prop: KKT matrix with B and R is invertible, and has psd upper left block}
Let $\slantProjMatrix := - \latencySlopeMatrix^{-1} \routingMatrix^\top (\routingMatrix \latencySlopeMatrix^{-1} \routingMatrix^\top)^{-1} \routingMatrix \latencySlopeMatrix^{-1} \allowbreak + \latencySlopeMatrix^{-1}$. Then:
\begin{align} \label{Eqn: Inverse of B, R top; R, O}
    &\begin{bmatrix}
        \latencySlopeMatrix & \routingMatrix^\top \\
        \routingMatrix & O
    \end{bmatrix}^{-1} \\ \nonumber
    = \ &\begin{bmatrix}
        \slantProjMatrix & \latencySlopeMatrix^{-1} \routingMatrix^\top (\routingMatrix \latencySlopeMatrix^{-1} \routingMatrix^\top)^{-1} \\
        (\routingMatrix \latencySlopeMatrix^{-1} \routingMatrix^\top)^{-1} \routingMatrix \latencySlopeMatrix^{-1} & - (\routingMatrix \latencySlopeMatrix^{-1} \routingMatrix^\top)^{-1}
    \end{bmatrix}.
\end{align}
Moreover, $\slantProjMatrix$ is symmetric positive semidefinite.
\end{proposition}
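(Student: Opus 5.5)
The plan is to first show that $\routingMatrix$ has full row rank. Then the Schur complement $S := \routingMatrix \latencySlopeMatrix^{-1} \routingMatrix^\top$ is invertible, and the displayed inverse can be checked by direct block multiplication.

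\textbf{Full row rank of $\routingMatrix$.} Rows of $\routingMatrix$ are indexed by $i \in \nodes \setminus \{d\}$, and row $i$ is the node--edge incidence vector at $i$. Suppose $y^\top \routingMatrix = 0$ for some $y \in \R^{|\nodes|-1}$, and extend $y$ by $y_d := 0$. Then for each edge $e$ we get $y_{i_e} - y_{j_e} = 0$, so $y$ is constant along every edge. Every node lies on some $s$--$d$ path: in an acyclic network whose only source is $s$ and only sink is $d$, we can walk backward from any node to $s$ and forward to $d$. Hence $y$ is constant along paths ending at $d$, so $y \equiv y_d = 0$. This step is the only one where the network structure enters, and I expect it to be the main (mild) obstacle. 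It needs the implicit assumption, which I would state, that every node lies on some $s$--$d$ path.

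\textbf{Invertibility of $S$.} Since $\latencySlopeMatrix = \diag\{\latencySlopeVec\}$ with $\latencySlope_e > 0$, the matrix $\latencySlopeMatrix^{-1/2}$ exists. We have $S = (\routingMatrix \latencySlopeMatrix^{-1/2})(\routingMatrix \latencySlopeMatrix^{-1/2})^\top$, which is symmetric positive definite because $\routingMatrix \latencySlopeMatrix^{-1/2}$ has full row rank. So $S^{-1}$ exists and is symmetric.

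\textbf{Verifying the inverse.} Denote the proposed inverse by $M$ and multiply $\begin{bmatrix} \latencySlopeMatrix & \routingMatrix^\top \\ \routingMatrix & O \end{bmatrix}$ by $M$ blockwise:
\begin{itemize}
\item Top-left: $\latencySlopeMatrix\slantProjMatrix + \routingMatrix^\top S^{-1}\routingMatrix \latencySlopeMatrix^{-1} = -\routingMatrix^\top S^{-1}\routingMatrix\latencySlopeMatrix^{-1} + I + \routingMatrix^\top S^{-1}\routingMatrix\latencySlopeMatrix^{-1} = I$.
\item Top-right: $\routingMatrix^\top S^{-1} - \routingMatrix^\top S^{-1} = O$.
\item Bottom-left: $\routingMatrix\slantProjMatrix = -S S^{-1}\routingMatrix\latencySlopeMatrix^{-1} + \routingMatrix\latencySlopeMatrix^{-1} = O$.
\item Bottom-right: $S S^{-1} = I$.
\end{itemize}
A one-sided inverse of a square matrix is two-sided, which establishes \eqref{Eqn: Inverse of B, R top; R, O}.

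\textbf{Symmetry and positive semidefiniteness of $\slantProjMatrix$.} Symmetry is immediate, since $\latencySlopeMatrix^{-1}$ and $S^{-1}$ are symmetric. For semidefiniteness, write
\[
\slantProjMatrix = \latencySlopeMatrix^{-1/2}\big(I - A^\top (A A^\top)^{-1} A\big)\latencySlopeMatrix^{-1/2}, \qquad A := \routingMatrix \latencySlopeMatrix^{-1/2}.
\]
The middle factor is the orthogonal projector onto $\nullspace(A)$, so it is symmetric and idempotent, hence positive semidefinite. A congruence by the symmetric matrix $\latencySlopeMatrix^{-1/2}$ preserves positive semidefiniteness, so $\slantProjMatrix \succeq 0$.
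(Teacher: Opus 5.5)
Your proof is correct and follows essentially the same route as the paper. You get invertibility of $\routingMatrix \latencySlopeMatrix^{-1} \routingMatrix^\top$ from full row rank of $\routingMatrix$, verify the block inverse directly, and use the same congruence $\slantProjMatrix = \latencySlopeMatrix^{-1/2}(I - A^\top(AA^\top)^{-1}A)\latencySlopeMatrix^{-1/2}$ with an orthogonal projector in the middle. The only addition is that you prove the full-row-rank fact yourself, where the paper cites it. Your walk argument also shows that ``every node lies on an $s$--$d$ path'' already follows from acyclicity and the paper's unique-source/unique-sink assumption, so it is not an extra hypothesis you need to state.
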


\begin{proof}
\eqref{Eqn: Inverse of B, R top; R, O} follows straightforwardly from algebraic manipulations; we note that $\routingMatrix \latencySlopeMatrix^{-1} \routingMatrix^\top$ is invertible, since $\latencySlopeMatrix = \diag\{\latencySlopeVec\}$ is 
invertible
under Assumption \ref{Assump: Linear Latency Functions}
and $\routingMatrix$ has full row rank \cite[Prop. 2.8]{Chiu2026TechnicalNoteTechnicalNoteGraphTheoreticPropertiesofTrafficNetworks}.
To prove that $\slantProjMatrix$ is symmetric positive semidefinite, we note that:
\begin{align} \label{Eqn: Projection Interpretation of upper left block of B, R; top R, O inverse}
    \slantProjMatrix = \ &\latencySlopeMatrix^{-1/2} \big( I - \latencySlopeMatrix^{-1/2} \routingMatrix^\top (\routingMatrix \latencySlopeMatrix^{-1} \routingMatrix^\top)^{-1} \routingMatrix \latencySlopeMatrix^{-1/2} \big) \latencySlopeMatrix^{-1/2}.
\end{align}
Since $\latencySlopeMatrix$ is symmetric positive definite and $\routingMatrix$ has full row rank, $\latencySlopeMatrix^{-1/2}\routingMatrix^\top$ has full column rank. 
It
is well-known that, given a matrix $X \in \R^{n \times p}$ with full column rank, the orthogonal projection onto $\rangespace(X)$, i.e., the range space of $X$, can be represented by the matrix $X(X^\top X)^{-1} X$. Thus, the matrix $\latencySlopeMatrix^{-1/2} \routingMatrix^\top (\routingMatrix \latencySlopeMatrix^{-1} \routingMatrix^\top)^{-1} \routingMatrix \latencySlopeMatrix^{-1/2}$
in \eqref{Eqn: Projection Interpretation of upper left block of B, R; top R, O inverse} represents the orthogonal projection onto $\rangespace(\latencySlopeMatrix^{-1/2} \routingMatrix^\top)$. 
Thus,
$I - \latencySlopeMatrix^{-1/2} \routingMatrix^\top (\routingMatrix \latencySlopeMatrix^{-1} \routingMatrix^\top)^{-1} \routingMatrix \latencySlopeMatrix^{-1/2}$ represents the orthogonal projection onto $\rangespace(\latencySlopeMatrix^{-1/2} \routingMatrix^\top)^\perp$, and is thus symmetric positive semidefinite. Thus, the matrix \eqref{Eqn: Projection Interpretation of upper left block of B, R; top R, O inverse} is likewise symmetric positive semidefinite, as desired.
\end{proof}

\begin{proposition}
\label{Prop: Null space of KKT matrix psd upper left block}
$\nullspace(\slantProjMatrix) = \rangespace(\routingMatrix^\top)$.
\end{proposition}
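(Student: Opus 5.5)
We use the factorization \eqref{Eqn: Projection Interpretation of upper left block of B, R; top R, O inverse} from the proof of Prop.~\ref{Prop: KKT matrix with B and R is invertible, and has psd upper left block}, namely $\slantProjMatrix = \latencySlopeMatrix^{-1/2} \Pi^\perp \latencySlopeMatrix^{-1/2}$. Here $\Pi^\perp := I - \latencySlopeMatrix^{-1/2} \routingMatrix^\top (\routingMatrix \latencySlopeMatrix^{-1} \routingMatrix^\top)^{-1} \routingMatrix \latencySlopeMatrix^{-1/2}$ is the orthogonal projection onto $\rangespace(\latencySlopeMatrix^{-1/2}\routingMatrix^\top)^\perp$. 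Since $\latencySlopeMatrix^{-1/2}$ is invertible, we have $\slantProjMatrix x = 0$ if and only if $\Pi^\perp \latencySlopeMatrix^{-1/2} x = 0$. The null space of an orthogonal projection is the orthogonal complement of its range. Hence $\slantProjMatrix x = 0$ if and only if $\latencySlopeMatrix^{-1/2} x \in \rangespace(\latencySlopeMatrix^{-1/2} \routingMatrix^\top)$.

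We then translate this condition back. The vector $\latencySlopeMatrix^{-1/2} x$ equals $\latencySlopeMatrix^{-1/2} \routingMatrix^\top y$ for some $y$ if and only if $x = \routingMatrix^\top y$, again by invertibility of $\latencySlopeMatrix^{-1/2}$. Therefore $\nullspace(\slantProjMatrix) = \rangespace(\routingMatrix^\top)$.

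As a sanity check, and as an alternative route, the two inclusions can be verified directly.
\begin{itemize}
\item \emph{$\rangespace(\routingMatrix^\top) \subseteq \nullspace(\slantProjMatrix)$.} Substituting $x = \routingMatrix^\top y$ into the definition gives $\slantProjMatrix \routingMatrix^\top y = -\latencySlopeMatrix^{-1}\routingMatrix^\top y + \latencySlopeMatrix^{-1}\routingMatrix^\top y = 0$.
\item \emph{$\nullspace(\slantProjMatrix) \subseteq \rangespace(\routingMatrix^\top)$.} If $\slantProjMatrix x = 0$, the definition yields $x = \routingMatrix^\top (\routingMatrix \latencySlopeMatrix^{-1}\routingMatrix^\top)^{-1}\routingMatrix\latencySlopeMatrix^{-1}x$, which lies in $\rangespace(\routingMatrix^\top)$.
\end{itemize}
This direct argument avoids the projection machinery entirely.

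There is no real obstacle in this argument. The only ingredient needing care is the invertibility of $\routingMatrix\latencySlopeMatrix^{-1}\routingMatrix^\top$, which was already established in Prop.~\ref{Prop: KKT matrix with B and R is invertible, and has psd upper left block} from the full row rank of $\routingMatrix$ and Assumption~\ref{Assump: Linear Latency Functions}.
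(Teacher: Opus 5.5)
Your proof is correct, and your main argument takes a different route from the paper's. The paper only proves the two inclusions directly from the definition of $\slantProjMatrix$. For $\nullspace(\slantProjMatrix) \subseteq \rangespace(\routingMatrix^\top)$ it multiplies $\slantProjMatrix u = 0$ by $\latencySlopeMatrix$ and rearranges. For the reverse inclusion it substitutes $u = \routingMatrix^\top v$ into $\slantProjMatrix$. This is exactly the ``sanity check'' in your bullets. Your computation $\slantProjMatrix \routingMatrix^\top y = 0$ is the right one; the paper's displayed chain ends in $-\latencySlopeMatrix^{-1}\routingMatrix^\top v$, evidently a typo (the $+\latencySlopeMatrix^{-1}\routingMatrix^\top v$ term was dropped). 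Your primary argument instead reuses the factorization $\slantProjMatrix = \latencySlopeMatrix^{-1/2}\Pi^\perp\latencySlopeMatrix^{-1/2}$ from the proof of Prop.~\ref{Prop: KKT matrix with B and R is invertible, and has psd upper left block} and reads off the kernel of the orthogonal projection $\Pi^\perp$. That yields both inclusions at once as a single chain of equivalences. It also ties this result to the same projection picture that gave positive semidefiniteness: $\nullspace(\slantProjMatrix)$ is simply $\latencySlopeMatrix^{1/2}$ applied to $\rangespace(\latencySlopeMatrix^{-1/2}\routingMatrix^\top)$. The price is reliance on the square-root factorization and the projection interpretation. The paper's direct algebra is fully self-contained, using only the definition of $\slantProjMatrix$ and the invertibility of $\routingMatrix\latencySlopeMatrix^{-1}\routingMatrix^\top$.
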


\begin{proof}
To prove $\nullspace(\slantProjMatrix) \subseteq \rangespace(\routingMatrix^\top)$, let $u \in \nullspace(\slantProjMatrix)$. Then:
\begin{align} \nonumber
    0 &= \latencySlopeMatrix \slantProjMatrix u \\ \nonumber
    &= \latencySlopeMatrix (- \latencySlopeMatrix^{-1} \routingMatrix^\top (\routingMatrix \latencySlopeMatrix^{-1} \routingMatrix^\top)^{-1} \routingMatrix \latencySlopeMatrix^{-1} + \latencySlopeMatrix^{-1}) u \\ \nonumber
    &= - \routingMatrix^\top (\routingMatrix \latencySlopeMatrix^{-1} \routingMatrix^\top)^{-1} \routingMatrix \latencySlopeMatrix^{-1} u + u
\end{align}
Rearranging gives $u = \routingMatrix^\top (\routingMatrix \latencySlopeMatrix^{-1} \routingMatrix^\top)^{-1} \routingMatrix \latencySlopeMatrix^{-1} u \in \rangespace(\routingMatrix^\top)$. Thus, $\nullspace(\slantProjMatrix) \subseteq \rangespace(\routingMatrix^\top)$, as desired.

To prove $\rangespace(\routingMatrix^\top) \subseteq \nullspace(\slantProjMatrix)$, let $u \in \rangespace(\routingMatrix^\top)$. Then there exists some $v \in \R^{|\nodes|-1}$ such that $u = \routingMatrix^\top v$. Thus:
\begin{align} \nonumber
    \slantProjMatrix u &= \slantProjMatrix \routingMatrix^\top v \\ \nonumber
    &= (- \latencySlopeMatrix^{-1} \routingMatrix^\top (\routingMatrix \latencySlopeMatrix^{-1} \routingMatrix^\top)^{-1} \routingMatrix \latencySlopeMatrix^{-1} \allowbreak + \latencySlopeMatrix^{-1}) \routingMatrix^\top v \\ \nonumber
    &= - \latencySlopeMatrix^{-1} \routingMatrix^\top v
\end{align}
so $u \in \nullspace(\slantProjMatrix)$. Therefore, $\rangespace(\routingMatrix^\top) \subseteq \nullspace(\slantProjMatrix)$, as desired.
\end{proof}

Next, we present Assumption \ref{Assump: Toll Bounds for Ensuring Full Network Utilization}, which constrains disturbances and tolls to ensure all network edges $\edges$ are utilized at equilibrium regardless of the disturbance realization $\disturbVec$, a condition we term \textit{robust full network utilization}. 

To formally present Assumption \ref{Assump: Toll Bounds for Ensuring Full Network Utilization}, we require the following notation. 
For each $\disturbDistrDeviation, \disturbRange \geq 0$, we define 
the following subsets
toll sets
$\tollSet(\disturbDistrDeviation, \disturbRange), \tollSetInterior(\disturbDistrDeviation, \disturbRange) \subseteq \R^{|\edges|}$, 
which will facilitate the enforcement of robust full network utilization:
\begin{align} \label{Eqn: (Full) Toll Constraint Set}
    \tollSet(\disturbDistrDeviation, \disturbRange) &:= \{ \tollVec \in \R_{\geq 0}^{|\edges|}: \slantProjMatrix \tollVec \leq - \Vert \slantProjMatrix \Vert_2 (\disturbDistrDeviation + \disturbRange) \onesVector_{|\edges|} \\ \nonumber
    &\hspace{2.5cm} - \slantProjMatrix \meanDisturbNominal + \latencySlopeMatrix^{-1} \routingMatrix^\top (\routingMatrix \latencySlopeMatrix^{-1} \routingMatrix^\top)^{-1} \demandVec \}, \\ \label{Eqn: (Interior) Toll Constraint Set}
    \tollSetInterior(\disturbDistrDeviation, \disturbRange) &:= \{ \tollVec \in \R_{\geq 0}^{|\edges|}: \slantProjMatrix \tollVec < - \Vert \slantProjMatrix \Vert_2 (\disturbDistrDeviation + \disturbRange) \onesVector_{|\edges|} \\ \nonumber
    &\hspace{2.5cm} - \slantProjMatrix \meanDisturbNominal + \latencySlopeMatrix^{-1} \routingMatrix^\top (\routingMatrix \latencySlopeMatrix^{-1} \routingMatrix^\top)^{-1} \demandVec \}.
\end{align}
We note a monotonicity property that immediately follows from the definitions of $\tollSet$ and $\tollSetInterior$: For any $\disturbRange \geq 0$ and any $\disturbDistrDeviation_1, \disturbDistrDeviation_2 \geq 0$ such that $\disturbDistrDeviation_2 \geq \disturbDistrDeviation_1$, we have $\tollSet(\disturbDistrDeviation_2, \disturbRange) \subseteq \tollSet(\disturbDistrDeviation_1, \disturbRange)$ and $\tollSetInterior(\disturbDistrDeviation_2, \disturbRange) \subseteq \tollSetInterior(\disturbDistrDeviation_1, \disturbRange)$.

We now introduce Assumption \ref{Assump: Toll Bounds for Ensuring Full Network Utilization}.

\begin{assumption}(\textbf{Toll Bounds for Ensuring Full Network Utilization})
\label{Assump: Toll Bounds for Ensuring Full Network Utilization}
The nominal disturbance distribution $\disturbDistrNominal$, maximum spread $\delta$ of the disturbance distribution,
and the flow demand $\demand_s$ 
satisfy $\tollSetInterior(0, \disturbRange) \ne \emptyset$, i.e., there exists $\hat\tollVec \in \R_{\geq 0}^{|\edges|}$ such that $\slantProjMatrix \hat\tollVec < - \Vert \slantProjMatrix \Vert_2 \disturbRange \onesVector_{|\edges|} - \slantProjMatrix \meanDisturbNominal + \latencySlopeMatrix^{-1} \routingMatrix^\top (\routingMatrix \latencySlopeMatrix^{-1} \routingMatrix^\top)^{-1} \demandVec$.


\end{assumption}

We now characterize the maximum deviation $\disturbDistrDeviation > 0$ of the mean of the disturbance distribution $\disturbDistr$
under which the condition of robust full network utilization can be enforced.

\begin{lemma} \label{Lemma: Max Distributional Deviation}
Suppose Assumption \ref{Assump: Linear Latency Functions} holds, and Assumption \ref{Assump: Toll Bounds for Ensuring Full Network Utilization} holds for the nominal distribution $\disturbDistrNominal$, distribution spread $\disturbRange > 0$, and flow demand $\demand_s > 0$. Then the following linear program (LP):
\begin{subequations} \label{Eqn: LP for epsilon max}
\begin{align} \label{Eqn: LP for epsilon max, Objective}
    \disturbDistrDeviationMax := \max_{\disturbDistrDeviation, \tollVec} \hspace{5mm} &\disturbDistrDeviation \\ \label{Eqn: LP for epsilon max, Toll Bounds Constraint}
    \text{s.t.} \hspace{5mm} &\tollVec \in \tollSet(\disturbDistrDeviation, \disturbRange)
\end{align}
\end{subequations}
is feasible, with $\disturbDistrDeviationMax > 0$. Moreover, for any $\disturbDistrDeviation \geq 0$, we have $\tollSet(\disturbDistrDeviation, \disturbRange) \ne \emptyset$ if and only if $\disturbDistrDeviation \in [0, \disturbDistrDeviationMax]$, and $\tollSetInterior(\disturbDistrDeviation, \disturbRange) \ne \emptyset$ if and only if $\disturbDistrDeviation \in [0, \disturbDistrDeviationMax)$.
\end{lemma}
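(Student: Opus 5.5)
The constraint $\tollVec \in \tollSet(\disturbDistrDeviation, \disturbRange)$ is jointly linear in $(\disturbDistrDeviation, \tollVec)$. Writing $c := - \Vert \slantProjMatrix \Vert_2 \disturbRange \onesVector_{|\edges|} - \slantProjMatrix \meanDisturbNominal + \latencySlopeMatrix^{-1} \routingMatrix^\top (\routingMatrix \latencySlopeMatrix^{-1} \routingMatrix^\top)^{-1} \demandVec$, it reads
\begin{align*}
\slantProjMatrix \tollVec + \Vert \slantProjMatrix \Vert_2 \, \disturbDistrDeviation \, \onesVector_{|\edges|} \leq c, \qquad \tollVec \geq 0 .
\end{align*}
So the feasible set of \eqref{Eqn: LP for epsilon max} is a closed convex polyhedron in $(\disturbDistrDeviation, \tollVec)$. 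The plan has four steps: (i) show the LP is feasible with positive optimal value; (ii) show the LP is bounded, so $\disturbDistrDeviationMax$ is attained; (iii) deduce the characterization of $\tollSet(\disturbDistrDeviation, \disturbRange) \ne \emptyset$ by convexity; (iv) deduce the characterization of $\tollSetInterior(\disturbDistrDeviation, \disturbRange) \ne \emptyset$ by a slack argument.

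\emph{Step (i), feasibility and positivity.} By Assumption \ref{Assump: Toll Bounds for Ensuring Full Network Utilization}, there is $\hat\tollVec \geq 0$ with $\slantProjMatrix \hat\tollVec < c$ componentwise. Let $s := \min_e (c - \slantProjMatrix \hat\tollVec)_e > 0$. Then $(\disturbDistrDeviation, \hat\tollVec)$ is feasible for every $0 \le \disturbDistrDeviation \le s / \Vert \slantProjMatrix \Vert_2$, so $\disturbDistrDeviationMax > 0$.

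\emph{Step (ii), boundedness (the main obstacle).} The idea is to find a strictly positive vector $y$ with $y^\top \slantProjMatrix = 0$. By Prop. \ref{Prop: Null space of KKT matrix psd upper left block} and symmetry of $\slantProjMatrix$ (Prop. \ref{Prop: KKT matrix with B and R is invertible, and has psd upper left block}), it suffices to take $y = \routingMatrix^\top v$.
\begin{itemize}
\item For $v \in \R^{|\nodes|-1}$ we have $(\routingMatrix^\top v)_e = v_{i_e} - v_{j_e}$, with the convention $v_d := 0$.
\item Since $\network$ is acyclic with unique sink $d$, let $v_i$ be the maximum number of edges on a path from $i$ to $d$. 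Then $v_{i_e} \geq v_{j_e} + 1$ for every edge $e$, so $y > 0$.
\end{itemize}
Multiplying the constraint by $y^\top$ then gives
\begin{align*}
0 = y^\top \slantProjMatrix \tollVec \leq y^\top c - \Vert \slantProjMatrix \Vert_2 \, \disturbDistrDeviation \, y^\top \onesVector_{|\edges|},
\end{align*}
hence $\disturbDistrDeviation \le y^\top c / (\Vert \slantProjMatrix \Vert_2 \, y^\top \onesVector_{|\edges|})$. This bound requires $\slantProjMatrix \neq 0$, i.e. $\rangespace(\routingMatrix^\top) \neq \R^{|\edges|}$, i.e. $|\edges| > |\nodes| - 1$ (the network is not a tree). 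In the degenerate case $\slantProjMatrix = 0$ the LP is unbounded and $\disturbDistrDeviationMax = +\infty$, so I would either state this convention or exclude that case explicitly. When the LP is feasible and bounded, its optimum is attained at some $(\disturbDistrDeviationMax, \tollVec^\star)$.

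\emph{Step (iii), characterization of $\tollSet$.} If $\disturbDistrDeviation > \disturbDistrDeviationMax$, then $\tollSet(\disturbDistrDeviation, \disturbRange) = \emptyset$ by definition of the maximum. For $\disturbDistrDeviation \in [0, \disturbDistrDeviationMax]$, write $\disturbDistrDeviation = \lambda \disturbDistrDeviationMax$ with $\lambda \in [0,1]$. The point $\lambda(\disturbDistrDeviationMax, \tollVec^\star) + (1-\lambda)(0, \hat\tollVec)$ is feasible, because the feasible set is convex and $\tollVec \ge 0$ is preserved. Hence $\tollSet(\disturbDistrDeviation, \disturbRange) \neq \emptyset$.

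\emph{Step (iv), characterization of $\tollSetInterior$.} For $\disturbDistrDeviation < \disturbDistrDeviationMax$, the same convex combination has weight $1 - \lambda > 0$ on $\hat\tollVec$. Since $(0, \hat\tollVec)$ satisfies the constraint strictly and $(\disturbDistrDeviationMax, \tollVec^\star)$ satisfies it weakly, the combination satisfies it strictly, so $\tollSetInterior(\disturbDistrDeviation, \disturbRange) \neq \emptyset$. Conversely, suppose some $\tollVec$ is strictly feasible at $\disturbDistrDeviation = \disturbDistrDeviationMax$. Its positive slack allows increasing $\disturbDistrDeviation$ slightly while keeping $\tollVec$ feasible, which contradicts maximality. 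Finally, for $\disturbDistrDeviation > \disturbDistrDeviationMax$ we have $\tollSetInterior(\disturbDistrDeviation, \disturbRange) \subseteq \tollSet(\disturbDistrDeviation, \disturbRange) = \emptyset$. Together these give $\tollSetInterior(\disturbDistrDeviation, \disturbRange) \ne \emptyset$ if and only if $\disturbDistrDeviation \in [0, \disturbDistrDeviationMax)$.
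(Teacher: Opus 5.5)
Your proof is correct. Step (i) is the paper's argument: the slack of $\hat{\tollVec}$ divided by $\Vert\slantProjMatrix\Vert_2$. The paper also treats $\slantProjMatrix = O$ with the convention $\disturbDistrDeviationMax = +\infty$. In that case both set characterizations are trivial, because $\tollSet(\disturbDistrDeviation,\disturbRange)$ and $\tollSetInterior(\disturbDistrDeviation,\disturbRange)$ no longer depend on $\disturbDistrDeviation$ and contain $\hat{\tollVec}$. For step (iii), the paper uses the monotonicity $\tollSet(\disturbDistrDeviationMax,\disturbRange)\subseteq\tollSet(\disturbDistrDeviation,\disturbRange)$ instead of your convex combination with $(0,\hat{\tollVec})$; either works.

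Where you differ is steps (ii) and (iv), which supply arguments the paper's proof lacks. The paper simply asserts that some $\tollVec\in\tollSet(\disturbDistrDeviationMax,\disturbRange)$ exists. That presupposes the LP value is finite and attained, which it never shows when $\slantProjMatrix\neq O$. It also never proves the $\tollSetInterior$ characterization at all.

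Your certificate $y=\routingMatrix^\top v>0$ with $y^\top\slantProjMatrix=0$, built from longest-path potentials, is exactly what is missing. Its content is that $\slantProjMatrix\tollVec\in\nullspace(\routingMatrix)$ is a circulation, and on an acyclic network a circulation cannot be made uniformly negative. It also gives an explicit bound. Since $y^\top\slantProjMatrix\meanDisturbNominal=0$ and $y^\top\latencySlopeMatrix^{-1}\routingMatrix^\top(\routingMatrix\latencySlopeMatrix^{-1}\routingMatrix^\top)^{-1}\demandVec=v^\top\demandVec=v_s\demand_s$, you get $\disturbDistrDeviationMax\le v_s\demand_s/(\Vert\slantProjMatrix\Vert_2\,y^\top\onesVector_{|\edges|})-\disturbRange$.

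In step (iv) the convex combination is valid but not needed. For $\disturbDistrDeviation<\disturbDistrDeviationMax$, the maximizer $\tollVec^\star$ itself already lies in $\tollSetInterior(\disturbDistrDeviation,\disturbRange)$, because $\Vert\slantProjMatrix\Vert_2>0$ in the non-degenerate case.
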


\begin{proof}
First, we prove that the LP \eqref{Eqn: LP for epsilon max} is feasible, with $\disturbDistrDeviationMax > 0$.
Since Assumption \ref{Assump: Toll Bounds for Ensuring Full Network Utilization} holds, there exists some $\hat\tollVec \in \R_{\geq 0}^{|\edges|}$ such that:
\begin{align}
    \slantProjMatrix \hat\tollVec &< - \Vert \slantProjMatrix \Vert_2 \disturbRange \onesVector_{|\edges|} - \slantProjMatrix \meanDisturbNominal + \latencySlopeMatrix^{-1} \routingMatrix^\top (\routingMatrix \latencySlopeMatrix^{-1} \routingMatrix^\top)^{-1} \demandVec.
\end{align}
If $\slantProjMatrix = O$, then $\tollSet(\disturbDistrDeviation, \disturbRange)$ becomes independent of $\disturbDistrDeviation$ and $\disturbRange$, and so Assumption \ref{Assump: Toll Bounds for Ensuring Full Network Utilization} (i.e., $\tollSetInterior(0, \disturbRange) \ne \emptyset$) guarantees that \eqref{Eqn: LP for epsilon max, Toll Bounds Constraint} holds for any $\disturbDistrDeviation > 0$; thus, $\disturbDistrDeviationMax = +\infty$.
Otherwise, $\slantProjMatrix \ne O$, so $\Vert \slantProjMatrix \Vert_2 > 0$. We define:
\begin{align}
    \hat\disturbDistrDeviation &:= \frac{1}{\Vert \slantProjMatrix \Vert_2} \min_{k \in [|\edges|]} \big\{ \big( -\slantProjMatrix \hat\tollVec - \Vert \slantProjMatrix \Vert_2 \disturbRange \onesVector_{|\edges|} - \slantProjMatrix \meanDisturbNominal \\ \nonumber
    &\hspace{2.5cm} + \latencySlopeMatrix^{-1} \routingMatrix^\top (\routingMatrix \latencySlopeMatrix^{-1} \routingMatrix^\top)^{-1} \demandVec \big)_k \big\} > 0,
\end{align}
where $(\cdot)_k$ denotes the $k$-th component of a given input vector.
Then:
\begin{align} \nonumber
    &-\slantProjMatrix \hat\tollVec - \Vert \slantProjMatrix \Vert_2 (\disturbRange + \hat\disturbDistrDeviation) \onesVector_{|\edges|} - \slantProjMatrix \meanDisturbNominal \\ \nonumber
    &\hspace{1cm} + \latencySlopeMatrix^{-1} \routingMatrix^\top (\routingMatrix \latencySlopeMatrix^{-1} \routingMatrix^\top)^{-1} \demandVec \\ \nonumber
    = \ &-\hat\disturbDistrDeviation \Vert \slantProjMatrix \Vert_2 \onesVector_{|\edges|} + \big[ -\slantProjMatrix \hat\tollVec - \Vert \slantProjMatrix \Vert_2 (\disturbRange + \hat\disturbDistrDeviation) \onesVector_{|\edges|} \\ \nonumber
    &\hspace{1cm} - \slantProjMatrix \meanDisturbNominal + \latencySlopeMatrix^{-1} \routingMatrix^\top (\routingMatrix \latencySlopeMatrix^{-1} \routingMatrix^\top)^{-1} \demandVec \big] \\ \nonumber
    \geq \ &0.
\end{align}
Moreover, $\hat{\tollVec} \in \R_{\geq 0}^{|\edges|}$ and $\hat \disturbDistrDeviation \geq 0$ by construction, so $(\hat\tollVec, \hat\disturbDistrDeviation)$ is a feasible point of the LP \eqref{Eqn: LP for epsilon max}.
Thus, $\disturbDistrDeviationMax \geq \hat \disturbDistrDeviation > 0$.

Next, we prove that for any $\disturbDistrDeviation \geq 0$, we have $\tollSet(\disturbDistrDeviation, \disturbRange) \ne \emptyset$ if and only if $\disturbDistrDeviation \in [0, \disturbDistrDeviationMax]$. 
First, suppose $\tollSet(\disturbDistrDeviation, \disturbRange) \ne \emptyset$ for some $\disturbDistrDeviation \geq 0$, and take $\tollVec \in \tollSet(\disturbDistrDeviation, \disturbRange)$. Then $(\disturbDistrDeviation, \disturbRange)$ is feasible with respect to \eqref{Eqn: LP for epsilon max}, so $\disturbDistrDeviation \leq \disturbDistrDeviationMax$ by definition of $\disturbDistrDeviationMax$. Thus, $\disturbDistrDeviation \in [0, \disturbDistrDeviationMax]$, as desired. 
Conversely, suppose $\disturbDistrDeviation \in [0, \disturbDistrDeviationMax]$, then there must exist some $\tollVec \in \tollSet(\disturbDistrDeviationMax, \disturbRange)$ (otherwise, $\disturbDistrDeviationMax$ would not be feasible with respect to \eqref{Eqn: LP for epsilon max}, a contradiction). 
Since $\disturbDistrDeviation \in [0, \disturbDistrDeviationMax]$, by the monotonicity property of $\tollSet$, as remarked immediately after its definition, we have $\tollVec \in \tollSet(\disturbDistrDeviationMax, \disturbRange) \subseteq \tollSet(\disturbDistrDeviation, \disturbRange)$. Thus, $\tollSet(\disturbDistrDeviation, \disturbRange) \ne \emptyset$.

\end{proof}

As revealed in the proof of Thm. \ref{Thm: (g) Characterization of Equilibrium System Latency}, Assumption \ref{Assump: Toll Bounds for Ensuring Full Network Utilization} provides a sufficient condition on the toll $\tollVec$ to guarantee that the equilibrium flows corresponding to the deployment of $\tollVec$ have non-zero occupancy over each edge of the network.

\subsection{Problem Statement}
\label{subsec: Problem Statement}

We
now present our problem statement, which modifies the robust optimization program \eqref{Eqn: (Ideal) Problem statement} to seek a toll $\tollVec \in \R_{\geq 0}^{|\edges|}$ which minimizes the 
equilibrium system-wide latency, while robustly guaranteeing that the traffic network is fully utilized.
Concretely, 
suppose Assumption \ref{Assump: Linear Latency Functions} holds, and Assumption \ref{Assump: Toll Bounds for Ensuring Full Network Utilization} holds for the nominal distribution $\disturbDistrNominal$, distribution spread $\disturbRange > 0$, and flow demand $\demand_s > 0$. 
Let $\disturbDistrDeviationMax$ be as given in Lemma \ref{Lemma: Max Distributional Deviation}. 
Further, let $\M(\disturbRange)$ denote the set of disturbance distributions over $\R^n$ that, with probability 1, are supported in a 2-norm ball of radius $\disturbRange$ from its mean:
\begin{align} \label{Eqn: Local Disturbance Distributions}
    \M(\disturbRange) &:= \{\disturbDistr \in \mc{P}(\R^n): \Prob_{\disturbVec \sim \disturbDistr}\big( \Vert \disturbVec - \E[\disturbVec] \Vert_2 \leq \disturbRange \big) = 1. \}.
\end{align}
Then, for any $\disturbDistrDeviation \in [0, \disturbDistrDeviationMax]$, and $\tollVec \in \tollSet(\disturbDistrDeviation, \disturbRange)$, we aim to compute the set of \textit{distributionally robust tolls}, denoted below by $\optTollSet(\disturbDistrDeviation)$, which minimizes the 
system-wide latency at equilibrium over all possible disturbance random variables $\disturbVec \sim \disturbDistr$, with $\disturbDistr \in \gelbrichBall_\disturbDistrDeviation(\disturbDistrNominal) \cap \M(\disturbRange)$, while robustly guaranteeing full network utilization:
\begin{align} \label{Eqn: Problem Statement}
    \optTollSet(\disturbDistrDeviation) := \text{arg}\min_{\tollVec \in \tollSet(\disturbDistrDeviation, \disturbRange)} \max_{\disturbDistr \in \gelbrichBall_\disturbDistrDeviation(\disturbDistrNominal) \cap \M(\disturbRange)} \E_{\disturbVec \sim \disturbDistr} \big[ \latencyWorstEq(\disturbVec, \tollVec) \big].
\end{align}

\section{Main Results}
\label{sec: Main Results}


We describe the congestion game model and equilibrium traffic flow concepts underlying our main theoretical contributions (Sec. \ref{subsec: Characterization of Equilibrium Flows and Latency}). 
We then present a convex programming-based characterization of distributionally robust tolls, which serves as the main contribution of our work (Sec. \ref{subsec: Characterization of Distributionally Robust Optimal Tolls}).

\subsection{Characterization of Equilibrium Flows and Latency}
\label{subsec: Characterization of Equilibrium Flows and Latency}

\looseness=-1
First, we prove that under Assumptions \ref{Assump: Linear Latency Functions}-\ref{Assump: Toll Bounds for Ensuring Full Network Utilization} and appropriate bounds on the toll $\tollVec$,  the Nash equilibrium flow $\flowVecNash(\disturbVec, \tollVec)$ is an affine function of $(\disturbVec, \tollVec)$ with probability 1.


\begin{lemma}(\textbf{Characterization of Nash Equilibrium Flows})
\label{Lemma: (f NE) Characterization of Nash Equilibrium Flows}
Suppose Assumption \ref{Assump: Linear Latency Functions} holds.
Given $\disturbRange, \demand > 0$ and $\disturbDistrNominal$ satisfying Assumption \ref{Assump: Toll Bounds for Ensuring Full Network Utilization}, for any $\tollVec \in \tollSet(\disturbDistrDeviation, \disturbRange)$ and $\disturbDistr \in \gelbrichBall\disturbDistrDeviation(\disturbDistrNominal) \bigcap \M(\disturbRange)$, the Nash equilibrium flow (see \eqref{Eqn: Nash Equilibrium (path) flow, Def} and the related discussion) corresponding to $\tollVec$ and disturbance random variable $\disturbVec \sim \disturbDistr$, 
i.e., $\flowVecNash(\disturbVec, \tollVec)$, satisfies:
\begin{align} \label{Eqn: Nash equilibrium flow as affine function of disturbance and toll}
    \flowVecNash(\disturbVec, \tollVec) = - \slantProjMatrix(\disturbVec + \tollVec) + \latencySlopeMatrix^{-1}\routingMatrix^\top (\routingMatrix \latencySlopeMatrix^{-1} \routingMatrix^\top)^{-1} \demandVec)
\end{align}
with probability 1.
If we additionally have $\tollVec \in \tollSetInterior(\disturbDistrDeviation, \disturbRange)$, then $\flowVecNash(\disturbVec, \tollVec) > 0$ with probability 1, i.e., the network is robustly fully utilized.
\end{lemma}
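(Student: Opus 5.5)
The plan is to characterize $\flowVecNash(\disturbVec,\tollVec)$ as the unique minimizer of the Rosenthal potential \eqref{Eqn: Rosenthal Potential Function} over $\setFeasibleFlows$, via \eqref{Eqn: Nash Computation via Rosenthal Potential Function}. Under Assumption \ref{Assump: Linear Latency Functions} the potential is the strongly convex quadratic
\[
\tfrac12 \flowVec^\top \latencySlopeMatrix \flowVec + (\tollVec+\disturbVec)^\top \flowVec .
\]
So the equilibrium is the unique solution of a convex QP with constraints $\routingMatrix\flowVec=\demandVec$ and $\flowVec\ge 0$. The approach is to first drop the nonnegativity constraint, solve the resulting equality-constrained QP in closed form, and then verify that its solution is nonnegative. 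Since the relaxed problem has a larger feasible set, a nonnegative minimizer of it is also the minimizer of the original problem.

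\textbf{Relaxed problem.} The KKT system of the equality-constrained QP is
\[
\begin{bmatrix}\latencySlopeMatrix & \routingMatrix^\top\\ \routingMatrix & O\end{bmatrix}\begin{bmatrix}\flowVec\\ \boldnu\end{bmatrix}=\begin{bmatrix}-(\disturbVec+\tollVec)\\ \demandVec\end{bmatrix}.
\]
By Prop. \ref{Prop: KKT matrix with B and R is invertible, and has psd upper left block}, the KKT matrix is invertible. Reading off the first block row of its inverse gives the unique solution
\[
\tilde\flowVec=-\slantProjMatrix(\disturbVec+\tollVec)+\latencySlopeMatrix^{-1}\routingMatrix^\top(\routingMatrix\latencySlopeMatrix^{-1}\routingMatrix^\top)^{-1}\demandVec ,
\]
which is exactly the right-hand side of \eqref{Eqn: Nash equilibrium flow as affine function of disturbance and toll}. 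Strong convexity makes this the unique global minimizer of the relaxed problem.

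\textbf{Nonnegativity.} This is the main obstacle. I would decompose
\[
\disturbVec = \meanDisturbNominal + (\meanDisturb-\meanDisturbNominal) + (\disturbVec-\meanDisturb), \qquad \meanDisturb = \E_{\disturbDistr}[\disturbVec].
\]
Membership $\disturbDistr\in\gelbrichBall_\disturbDistrDeviation(\disturbDistrNominal)$ gives $\Vert\meanDisturb-\meanDisturbNominal\Vert_2\le\disturbDistrDeviation$, since the trace term in the Gelbrich distance is nonnegative (it is the squared Bures distance). Membership $\disturbDistr\in\M(\disturbRange)$ gives $\Vert\disturbVec-\meanDisturb\Vert_2\le\disturbRange$ almost surely. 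Each component then satisfies
\[
|(\slantProjMatrix v)_k|\le \Vert \slantProjMatrix v\Vert_2\le \Vert\slantProjMatrix\Vert_2\Vert v\Vert_2 ,
\]
so, with probability 1,
\[
-\slantProjMatrix\disturbVec \ge -\slantProjMatrix\meanDisturbNominal-\Vert\slantProjMatrix\Vert_2(\disturbDistrDeviation+\disturbRange)\onesVector .
\]
Combining this with the defining inequality of $\tollSet(\disturbDistrDeviation,\disturbRange)$, rearranged as
\[
-\slantProjMatrix\tollVec \ge \Vert\slantProjMatrix\Vert_2(\disturbDistrDeviation+\disturbRange)\onesVector+\slantProjMatrix\meanDisturbNominal-\latencySlopeMatrix^{-1}\routingMatrix^\top(\routingMatrix\latencySlopeMatrix^{-1}\routingMatrix^\top)^{-1}\demandVec ,
\]
yields $\tilde\flowVec\ge 0$ almost surely. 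For $\tollVec\in\tollSetInterior(\disturbDistrDeviation,\disturbRange)$ the inequality is strict, so $\tilde\flowVec>0$ almost surely.

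\textbf{Conclusion.} On this probability-one event, $\tilde\flowVec\in\setFeasibleFlows$ and $\tilde\flowVec$ minimizes the potential over the larger affine set, hence also over $\setFeasibleFlows$. By uniqueness, $\flowVecNash(\disturbVec,\tollVec)=\tilde\flowVec$. Equivalently, the full KKT conditions of the original QP hold with zero multipliers on $\flowVec\ge0$. This proves both claims of the lemma.
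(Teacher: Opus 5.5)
Your proposal is correct and follows essentially the same route as the paper. You solve the KKT system in closed form via the block inverse of the first proposition, with zero multipliers on $\flowVec \geq 0$. You then get nonnegativity from the decomposition $\disturbVec = \meanDisturbNominal + (\meanDisturb - \meanDisturbNominal) + (\disturbVec - \meanDisturb)$, combined with the operator-norm bound and the defining inequality of $\tollSet(\disturbDistrDeviation, \disturbRange)$, and conclude by uniqueness of the strongly convex minimizer. Your framing as "drop $\flowVec \geq 0$, then verify" is equivalent to the paper's direct check of the full KKT system. Your explicit remark that Gelbrich-ball membership gives $\Vert \meanDisturb - \meanDisturbNominal \Vert_2 \leq \disturbDistrDeviation$, because the trace term is nonnegative, fills in a step the paper only gestures at.
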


\begin{proof}
Let $\eventDisburbanceBound$ denote the event in which $\Vert \disturbVec - \E_{\disturbVec \sim \disturbDistr}[\disturbVec] \Vert_2 \leq \disturbRange$ when $\disturbVec$ is drawn from $\disturbDistr$. Since $\disturbDistr \in \M(\disturbRange)$, we have $\mc{P}(\eventDisburbanceBound) = 1$. Thus, it suffices to prove that \eqref{Eqn: Nash equilibrium flow as affine function of disturbance and toll} holds whenever the event $\eventDisburbanceBound$ occurs.

Under Assumption \ref{Assump: Linear Latency Functions}, the Nash equilibrium flow corresponding to the game $\gameCongestion$ is unique, i.e., $\setNashFlows(\disturbVec, \tollVec) = \{ \flowVecNash(\disturbVec, \tollVec) \}$.
Thus:
\begin{align} \nonumber
    \latencyWorstEq(\disturbVec, \tollVec) &:= \max_{\flowVec \in \setNashFlows(\disturbVec, \tollVec)} \sum_{e \in \edges} \flow_e \big[\latency_e(\flow_e) + \disturb_e \big] \\ \label{Eqn: g reformulated, general networks, linear latency functions}
    &= \sum_{e \in \edges} \latencySlope_e \big( \flowNash_e(\disturbVec, \tollVec) \big)^2 + \disturb_e \flowNash_e(\disturbVec, \tollVec).
\end{align}
Moreover, as described in \eqref{Eqn: Nash Computation via Rosenthal Potential Function}, $\flowVecNash(\disturbVec, \tollVec)$ is the (in this case, unique) minimizer of the Rosenthal potential function \eqref{Eqn: Rosenthal Potential Function}. Thus, we can write:
\begin{align} \nonumber
    \flowVecNash(\disturbVec, \tollVec) &= \arg \min_{\flowVec \in \setFeasibleFlows} \sum_{e \in \edges} \int_0^{\flow_e} \big( \latencySlope_e w + \disturb_e \big) dw + \toll_e \flow_e \\ \label{Eqn: f NE, general networks, linear latency functions}
    &= \arg \min_{\flowVec \in \setFeasibleFlows} \sum_{e \in \edges} \Big( \frac{1}{2} \latencySlope_e \flow_e^2 + (\disturb_e + \toll_e) \flow_e \Big)
\end{align}
Let $\dualIneq_e$ denote the dual variable associated with the non-negativity constraint $\flow_e \geq 0$, for each $e \in \edges$, and let $\dualEq_i$ denote the dual variable associated with the flow continuity constraint $\sum_{e \in \edges_i^+} \flow_e = \mathds{1}\{i=s\} + \sum_{e \in \edges_i^-} \flow_e$ at each node $i \in \nodes \backslash \{d\}$. For convenience, let $\dualIneqVec := (\dualIneq_e: e \in \edges)$ and $\dualEqVec := (\dualEq_i: i \in \nodes \backslash \{d\})$.
We write the Lagrangian of \eqref{Eqn: f NE, general networks, linear latency functions} by:
\begin{align} \nonumber
    &\lagrangian(\flowVec, \dualIneqVec, \dualEqVec) \\ \nonumber
    := \ &\sum_{e \in \edges} \Bigg[ \frac{1}{2} \latencySlope_e \flow_e^2 + (\disturb_e + \toll_e) \flow_e \Bigg] + \sum_{e \in \edges} \dualIneqVec^\top (-\flowVec) \\ \nonumber
    &\hspace{5mm} + \dualEqVec^\top (\routingMatrix \flowVec - \demandVec) 
\end{align} 
The KKT conditions for the optimal primal-dual variables $(\flowVec, \dualIneqVec, \dualEqVec) = (\flowVecNash(\disturbVec, \tollVec), \dualIneqVec^\star(\disturbVec, \tollVec), \dualEqVec^\star(\disturbVec, \tollVec))$ are thus:
\begin{subequations} \label{Eqn: KKT, general networks, linear latency functions}
\begin{align} \label{Eqn: KKT, Primal feasibility, flow continuity, general networks, linear latency functions}
    \demandVec &= \routingMatrix \flowVec, \\ 
    \label{Eqn: KKT, Primal feasibility, non-negativity, general networks, linear latency functions}
    0 &\leq \flow_e, \hspace{5mm} \forall e \in \edges, \\ \label{Eqn: KKT, Dual feasibility, general networks, linear latency functions}
    0 &\leq \dualIneq_e, \hspace{5mm} \forall e \in \edges, \\ \label{Eqn: KKT, Complementary Slackness, general networks, linear latency functions}
    0 &=  \dualIneq_e \flow_e, \hspace{5mm} \forall e \in \edges, \\ \label{Eqn: KKT, Stationarity, general networks, linear latency functions}
    0 &:= \frac{\partial \lagrangian}{\partial \flow_e} = \latencySlope_e \flowNash_e + (\disturb_e + \toll_e) - \dualIneq_e \\ \nonumber
    &\hspace{2cm} + (\routingMatrix^\top \dualEqVec)_e, \hspace{5mm} \forall e \in \edges.
\end{align}
\end{subequations}

We now claim that the primal-dual variable values defined below, i.e., $\hat \flowVec$, $\hat \dualIneqVec$, and $\hat \dualEqVec$, satisfy the KKT conditions \eqref{Eqn: KKT, general networks, linear latency functions}:
\begin{subequations} \label{Eqn: Optimal variables, general networks, linear latency functions}
\begin{align} \label{Eqn: f NE and nu star, general networks, linear latency functions}
    \begin{bmatrix}
        \hat\flowVec \\
        \hat\dualEqVec
    \end{bmatrix} &:= 
    \begin{bmatrix}
        \latencySlopeMatrix & \routingMatrix^\top \\
        \routingMatrix & O
    \end{bmatrix}^{-1}
    \begin{bmatrix}
        -\disturbVec - \tollVec \\ \demandVec
    \end{bmatrix}, \\ \label{Eqn: lambda star, general networks, linear latency functions}
    \hat\dualIneqVec &= 0.
\end{align}
\end{subequations}

First, \eqref{Eqn: KKT, Primal feasibility, flow continuity, general networks, linear latency functions} and \eqref{Eqn: KKT, Stationarity, general networks, linear latency functions} can be equivalently expressed as:
\begin{align}
    \begin{bmatrix}
        \latencySlopeMatrix & \routingMatrix^\top \\
        \routingMatrix & O
    \end{bmatrix}
    \begin{bmatrix}
        \flowVec \\ \dualEqVec
    \end{bmatrix}
    = \begin{bmatrix}
        -\disturbVec + \dualIneqVec - \tollVec \\ 
        \demandVec
    \end{bmatrix},
\end{align}
which is evidently satisfied when $(\flowVec, \dualEqVec, \dualIneqVec) = (\hat\flowVec, \hat\dualEqVec, \hat\dualIneqVec)$. Similarly, 
\eqref{Eqn: KKT, Dual feasibility, general networks, linear latency functions} and \eqref{Eqn: KKT, Complementary Slackness, general networks, linear latency functions} are satisfied when $\dualIneqVec = \hat\dualIneqVec$. Finally, to verify that \eqref{Eqn: KKT, Primal feasibility, non-negativity, general networks, linear latency functions} holds when $\flowVec = \hat\flowVec$, i.e., that $\hat\flowVec \geq 0$, we apply Prop. \ref{Prop: KKT matrix with B and R is invertible, and has psd upper left block} to 
reveal that:
\begin{align} \label{Eqn: (With hatted flow) Nash equilibrium flow as affine function of disturbance and toll}
    \hat\flowVec &= \slantProjMatrix(-\disturbVec - \tollVec) + \latencySlopeMatrix^{-1} \routingMatrix^\top (\routingMatrix \latencySlopeMatrix^{-1}\routingMatrix^\top)^{-1} \demandVec \\ \nonumber
    &= - \slantProjMatrix\tollVec - \slantProjMatrix(\disturbVec - \E_{\disturbVec \sim \disturbDistr}[\disturbVec]) - \slantProjMatrix(\E_{\disturbVec \sim \disturbDistr}[\disturbVec] - \meanDisturbNominal) \\ \nonumber
    &\hspace{1cm} - \slantProjMatrix\meanDisturbNominal + \latencySlopeMatrix^{-1} \routingMatrix^\top (\routingMatrix \latencySlopeMatrix^{-1}\routingMatrix^\top)^{-1} \demandVec \\ \label{Eqn: Applying disturbance bounds, in lemma on equilibrium flows' formula}
    &\geq - \slantProjMatrix\tollVec - \Vert \slantProjMatrix \Vert_2 (\disturbDistrDeviation + \disturbRange) \onesVector_{|\edges|} - \slantProjMatrix\meanDisturbNominal \\ \nonumber
    &\hspace{1cm} + \latencySlopeMatrix^{-1} \routingMatrix^\top (\routingMatrix \latencySlopeMatrix^{-1}\routingMatrix^\top)^{-1} \demandVec \\ \label{Eqn: Applying toll bounds, in lemma on equilibrium flows' formula}
    &\geq 0,
\end{align}
where \eqref{Eqn: Applying disturbance bounds, in lemma on equilibrium flows' formula} follows by applying the fact that $\disturbDistr \in \gelbrichBall\disturbDistrDeviation(\disturbDistrNominal) \bigcap \M(\disturbRange)$ and applying the Cauchy-Schwarz inequality, while \eqref{Eqn: Applying toll bounds, in lemma on equilibrium flows' formula} follows from the fact that $\tollVec \in \tollSet(\disturbDistrDeviation, \disturbRange)$.
Thus, to summarize, \eqref{Eqn: KKT, general networks, linear latency functions} holds when $(\flowVec, \dualIneqVec, \dualEqVec) = (\hat\flowVec, \hat\dualIneqVec, \hat\dualEqVec)$. 

Since $\flowVecNash(\disturbVec, \tollVec)$ is the unique minimizer of the convex program \eqref{Eqn: f NE, general networks, linear latency functions},
$\flowVecNash(\disturbVec, \tollVec)$ must be the unique flow for which there exists some $(\dualIneqVec, \dualEqVec)$ such that $(\flowVecNash(\disturbVec, \tollVec), \dualIneqVec, \dualEqVec)$ satisfies the KKT conditions in \eqref{Eqn: f NE, general networks, linear latency functions}. We thus conclude that:
\begin{align} \label{Eqn: Final equality, in lemma on equilibrium flows' formula}
    \flowVecNash(\disturbVec, \tollVec) &= - \slantProjMatrix(\disturbVec + \tollVec) + \latencySlopeMatrix^{-1}\routingMatrix^\top (\routingMatrix \latencySlopeMatrix^{-1} \routingMatrix^\top)^{-1} \demandVec),
\end{align}
where the equality \eqref{Eqn: Final equality, in lemma on equilibrium flows' formula} follows from \eqref{Eqn: (With hatted flow) Nash equilibrium flow as affine function of disturbance and toll}.

Finally, to prove that $\flowVecNash(\disturbVec, \tollVec) > 0$ with probability 1 when $\tollVec$ additionally satisfies $\tollVec \in \tollSetInterior(\disturbDistrDeviation, \disturbRange)$, we merely note that, in the above proof, \eqref{Eqn: Applying toll bounds, in lemma on equilibrium flows' formula} holds with strict inequality when $\tollVec \in \tollSetInterior(\disturbDistrDeviation, \disturbRange)$.
\end{proof}

Equipped with Lemma \ref{Lemma: (f NE) Characterization of Nash Equilibrium Flows}, we now aim to prove that under Assumptions \ref{Assump: Linear Latency Functions} and \ref{Assump: Toll Bounds for Ensuring Full Network Utilization}, the worst-case system latency realization under tolls $\tollVec \in \R_{\geq 0}^{|\edges|}$ and disturbances $\disturbVec \in \R_{\geq 0}^{|\edges|}$, given by $\latencyWorstEq(\disturbVec, \tollVec)$, is an affine function of $\disturbVec$ for any fixed $\tollVec$, and a convex quadratic function of $\tollVec$ for any fixed $\disturbVec$.

\begin{theorem}(\textbf{Characterization of Equilibrium System Latency})
\label{Thm: (g) Characterization of Equilibrium System Latency}
Under Assumption \ref{Assump: Linear Latency Functions},
given $\disturbRange, \demand > 0$ and $\disturbDistrNominal$ satisfying Assumption \ref{Assump: Toll Bounds for Ensuring Full Network Utilization}, for any $\tollVec \in \tollSet(\disturbDistrDeviation, \disturbRange)$ and $\disturbDistr \in \gelbrichBall\disturbDistrDeviation(\disturbDistrNominal) \bigcap \M(\disturbRange)$, the equilibrium system-wide latency is:
\begin{align} \label{Eqn: g, Final Expression, General Networks}
    \latencyWorstEq(\disturbVec, \tollVec) &=
    \begin{bmatrix}
        \tollVec^\top & \demandVec^\top
    \end{bmatrix}
    \begin{bmatrix}
        \latencySlopeMatrix & \routingMatrix^\top \\
        \routingMatrix & 0
    \end{bmatrix}^{-1}
    \begin{bmatrix}
        \disturbVec + \tollVec \\ -\demandVec 
    \end{bmatrix}
\end{align}
with probability 1.
\end{theorem}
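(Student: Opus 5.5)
The plan is to work on the probability-one event $\eventDisburbanceBound$ from the proof of Lemma \ref{Lemma: (f NE) Characterization of Nash Equilibrium Flows}. On that event the equilibrium is unique, so $\latencyWorstEq(\disturbVec,\tollVec)=\systemLatency(\flowVecNash;\disturbVec)=\flowVecNash{}^\top\latencySlopeMatrix\flowVecNash+\disturbVec^\top\flowVecNash$, as in \eqref{Eqn: g reformulated, general networks, linear latency functions}. Moreover $(\flowVecNash,\hat\dualEqVec)$ is given explicitly by \eqref{Eqn: f NE and nu star, general networks, linear latency functions} with $\hat\dualIneqVec=0$. The idea is to write the system latency in terms of the primal-dual pair rather than $\flowVecNash$ alone, and then read off \eqref{Eqn: g, Final Expression, General Networks}.

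First, I will use the two KKT identities valid on $\eventDisburbanceBound$ with $\hat\dualIneqVec=0$. Stationarity gives $\latencySlopeMatrix\flowVecNash+\routingMatrix^\top\hat\dualEqVec=-\disturbVec-\tollVec$, and primal feasibility gives $\routingMatrix\flowVecNash=\demandVec$. Left-multiplying the stationarity identity by $\flowVecNash{}^\top$ and using $\flowVecNash{}^\top\routingMatrix^\top\hat\dualEqVec=\demandVec^\top\hat\dualEqVec$ yields
\begin{align*}
\flowVecNash{}^\top\latencySlopeMatrix\flowVecNash = -\flowVecNash{}^\top(\disturbVec+\tollVec)-\demandVec^\top\hat\dualEqVec .
\end{align*}
Substituting this into $\systemLatency$, the $\disturbVec^\top\flowVecNash$ terms cancel, leaving
\begin{align*}
\latencyWorstEq(\disturbVec,\tollVec)=-\tollVec^\top\flowVecNash-\demandVec^\top\hat\dualEqVec=-\begin{bmatrix}\tollVec^\top & \demandVec^\top\end{bmatrix}\begin{bmatrix}\flowVecNash\\ \hat\dualEqVec\end{bmatrix}.
\end{align*}

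Second, I will substitute \eqref{Eqn: f NE and nu star, general networks, linear latency functions}, namely $\begin{bmatrix}\flowVecNash\\ \hat\dualEqVec\end{bmatrix}=\begin{bmatrix}\latencySlopeMatrix & \routingMatrix^\top\\ \routingMatrix & O\end{bmatrix}^{-1}\begin{bmatrix}-\disturbVec-\tollVec\\ \demandVec\end{bmatrix}$. The invertibility of this matrix is guaranteed by Prop. \ref{Prop: KKT matrix with B and R is invertible, and has psd upper left block}. Absorbing the minus sign into the right-hand vector gives exactly $\begin{bmatrix}\tollVec^\top & \demandVec^\top\end{bmatrix}\begin{bmatrix}\latencySlopeMatrix & \routingMatrix^\top\\ \routingMatrix & 0\end{bmatrix}^{-1}\begin{bmatrix}\disturbVec+\tollVec\\ -\demandVec\end{bmatrix}$. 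Since this holds on $\eventDisburbanceBound$, it holds with probability 1.

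The only delicate point is the first step. One must make sure the dual variable $\hat\dualEqVec$ appearing in the closed form is the one satisfying stationarity with $\hat\dualIneqVec=0$, which requires $\tollVec\in\tollSet(\disturbDistrDeviation,\disturbRange)$ so that Lemma \ref{Lemma: (f NE) Characterization of Nash Equilibrium Flows} applies. One must also avoid expanding $\flowVecNash{}^\top\latencySlopeMatrix\flowVecNash$ directly via $\slantProjMatrix$, which would be messier. The KKT-based substitution sidesteps that computation entirely. As a sanity check, one can expand the block inverse from Prop. \ref{Prop: KKT matrix with B and R is invertible, and has psd upper left block} to confirm that the result is affine in $\disturbVec$ and quadratic in $\tollVec$ with Hessian $2\slantProjMatrix\succeq 0$.
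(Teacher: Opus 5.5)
Your proposal is correct and follows essentially the same route as the paper. On the probability-one event, you use stationarity with $\hat\dualIneqVec = 0$ and flow conservation $\routingMatrix\flowVecNash = \demandVec$ to collapse the system latency to $-\tollVec^\top\flowVecNash - \demandVec^\top\hat\dualEqVec$, then substitute the block-inverse closed form from Lemma \ref{Lemma: (f NE) Characterization of Nash Equilibrium Flows}, exactly as the paper does.
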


\begin{proof}
From Lemma \ref{Lemma: (f NE) Characterization of Nash Equilibrium Flows}, we find that under the event $\eventDisburbanceBound := \big\{\Vert \disturbVec - \E_{\disturbVec \sim \disturbDistr}[\disturbVec] \Vert_2 \leq \disturbRange \big\}$ (which holds with probability 1 since $\disturbDistr \in \M(\disturbRange)$), \eqref{Eqn: Nash equilibrium flow as affine function of disturbance and toll} holds, i.e.,:
\begin{align} \nonumber
    \flowVecNash(\disturbVec, \tollVec) = - \slantProjMatrix(\disturbVec + \tollVec) + \latencySlopeMatrix^{-1}\routingMatrix^\top (\routingMatrix \latencySlopeMatrix^{-1} \routingMatrix^\top)^{-1} \demandVec).
\end{align}
Substituting \eqref{Eqn: Nash equilibrium flow as affine function of disturbance and toll} back into \eqref{Eqn: g reformulated, general networks, linear latency functions}, we find that:
\begin{align} \nonumber
    \latencyWorstEq(\disturbVec, \tollVec) &= \sum_{e \in \edges} \latencySlope_e \big( \flowNash_e(\disturbVec, \tollVec) \big)^2 + \disturb_e \flowNash_e(\disturbVec, \tollVec) \\ \nonumber
    &= \sum_{e \in \edges} \flowNash_e(\disturbVec, \tollVec) \cdot \Big( \latencySlope_e  \flowNash_e(\disturbVec, \tollVec) + \disturb_e \Big) \\ \label{Eqn: g reformulation, general networks, linear latency functions, via stationarity condition}
    &= \sum_{e \in \edges} \flowNash_e(\disturbVec, \tollVec) \cdot \big( -\toll_e - (\routingMatrix^\top \dualEqVec^\star)_e(\disturbVec, \tollVec) \big) \\ \nonumber
    &= - \tollVec^\top \flowVecNash(\disturbVec, \tollVec) - \flowVecNash(\disturbVec, \tollVec)^\top \routingMatrix^\top \dualEqVec^\star \\
    \label{Eqn: g reformulation, general networks, linear latency functions, via flow continuity}
    &= - \tollVec^\top \flowVecNash(\disturbVec, \tollVec) - \demandVec^\top \dualEqVec^\star \\ \label{Eqn: g reformulation, general networks, linear latency functions, via optimal f NE and nu star equations}
    &= \begin{bmatrix}
        -\tollVec^\top & - \demandVec^\top
    \end{bmatrix}
    \begin{bmatrix}
        \latencySlopeMatrix & \routingMatrix^\top \\
        \routingMatrix & 0
    \end{bmatrix}^{-1}
    \begin{bmatrix}
        -\disturbVec - \tollVec \\ \demandVec 
    \end{bmatrix} \\ \nonumber
    &= \begin{bmatrix}
        \tollVec^\top & \demandVec^\top
    \end{bmatrix}
    \begin{bmatrix}
        \latencySlopeMatrix & \routingMatrix^\top \\
        \routingMatrix & 0
    \end{bmatrix}^{-1}
    \begin{bmatrix}
        \disturbVec + \tollVec \\ -\demandVec 
    \end{bmatrix}
\end{align}
where \eqref{Eqn: g reformulation, general networks, linear latency functions, via stationarity condition} and \eqref{Eqn: g reformulation, general networks, linear latency functions, via flow continuity} respectively follow from \eqref{Eqn: KKT, Stationarity, general networks, linear latency functions} and \eqref{Eqn: KKT, Primal feasibility, flow continuity, general networks, linear latency functions}.

\end{proof}

\begin{remark}
Under the conditions imposed in the statement of Thm. \ref{Thm: (g) Characterization of Equilibrium System Latency}, $g$ is affine in $\disturbVec$ for fixed $\tollVec$ and quadratic in $\tollVec$ for fixed $\disturbVec$. 
Moreover, $g$ is convex in $\tollVec$ for fixed $\disturbVec$ if and only if the top left block of $[\latencySlopeMatrix, \routingMatrix^\top; \routingMatrix, O]^{-1}$ is symmetric positive semi-definite, which holds true due to Prop. \ref{Prop: KKT matrix with B and R is invertible, and has psd upper left block}.
\end{remark}

\subsection{Characterization of Distributionally Robust Optimal Tolls}
\label{subsec: Characterization of Distributionally Robust Optimal Tolls}

Armed with the expressions for $\flowVecNash$ and $g$ furnished by Prop. \ref{Lemma: (f NE) Characterization of Nash Equilibrium Flows} and Thm. \ref{Thm: (g) Characterization of Equilibrium System Latency}, we proceed to prove that the set $\optTollSet(\disturbDistrDeviation)$ of distributionally robust
tolls
defined by \eqref{Eqn: Problem Statement} can be computed via convex programming (Thm. \ref{Thm: Optimal Toll for Distributional Robustness}).


\begin{theorem} 
\label{Thm: Optimal Toll for Distributional Robustness}
Under
Assumptions \ref{Assump: Linear Latency Functions} and \ref{Assump: Toll Bounds for Ensuring Full Network Utilization}, 
the set of all distributionally robust optimal tolls $\optTollSet(\disturbDistrDeviation)$, as defined by \eqref{Eqn: Problem Statement}, is the minimizer set of the convex optimization problem \eqref{Eqn: Convex Program, Optimal Toll for Distributional Robustness}:
\begin{align} \label{Eqn: Convex Program, Optimal Toll for Distributional Robustness}
    &\min_{\tollVec \in \tollSet(\disturbDistrDeviation, \disturbRange)}. \hspace{2mm} \disturbDistrDeviation \big\Vert \slantProjMatrix \tollVec + \latencySlopeMatrix^{-1} \routingMatrix^\top (\routingMatrix \latencySlopeMatrix^{-1} \routingMatrix^\top)^{-1} \demandVec \big\Vert_2 \\ \nonumber
    &\hspace{1.8cm} 
    + \tollVec^\top \slantProjMatrix \tollVec + (\meanDisturbNominal)^\top \slantProjMatrix \tollVec,
\end{align}
Further, the corresponding worst-case disturbance distribution $\disturbDistr$ is obtained by shifting the nominal disturbance distribution $\disturbDistrNominal$ by:
\begin{align}
\label{Eqn: Worst-case Disturbance Distribution Shift}
    \E[\disturbDistr] = \meanDisturbNominal + \epsilon \cdot \frac{\slantProjMatrix \tollVec + \latencySlopeMatrix^{-1} \routingMatrix^\top (\routingMatrix \latencySlopeMatrix^{-1} \routingMatrix^\top)^{-1} \demandVec}{\Vert \slantProjMatrix \tollVec + \latencySlopeMatrix^{-1} \routingMatrix^\top (\routingMatrix \latencySlopeMatrix^{-1} \routingMatrix^\top)^{-1} \demandVec \Vert_2}.
\end{align}
\end{theorem}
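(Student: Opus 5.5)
The plan is to use Thm.~\ref{Thm: (g) Characterization of Equilibrium System Latency} to reduce the inner maximization over distributions to a maximization over the mean alone. The key observation is that $\latencyWorstEq(\disturbVec, \tollVec)$ is affine in $\disturbVec$, so its expectation depends on $\disturbDistr$ only through $\meanDisturb := \E_{\disturbVec \sim \disturbDistr}[\disturbVec]$.

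First I expand \eqref{Eqn: g, Final Expression, General Networks} using the block inverse from Prop.~\ref{Prop: KKT matrix with B and R is invertible, and has psd upper left block}. Write $M := \latencySlopeMatrix^{-1}\routingMatrix^\top(\routingMatrix\latencySlopeMatrix^{-1}\routingMatrix^\top)^{-1}$ and $S := \routingMatrix\latencySlopeMatrix^{-1}\routingMatrix^\top$. Then
\begin{align*}
\latencyWorstEq(\disturbVec,\tollVec) = \tollVec^\top\slantProjMatrix(\disturbVec+\tollVec) - \tollVec^\top M\demandVec + \demandVec^\top M^\top(\disturbVec+\tollVec) + \demandVec^\top S^{-1}\demandVec .
\end{align*}
The cross terms $-\tollVec^\top M\demandVec$ and $\demandVec^\top M^\top\tollVec$ cancel. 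Using symmetry of $\slantProjMatrix$, this leaves
\begin{align*}
\latencyWorstEq(\disturbVec,\tollVec) = \tollVec^\top\slantProjMatrix\tollVec + \disturbVec^\top(\slantProjMatrix\tollVec + M\demandVec) + \demandVec^\top S^{-1}\demandVec .
\end{align*}
This holds with probability $1$ for every admissible $\disturbDistr$. Taking expectations gives $\E[\latencyWorstEq] = \tollVec^\top\slantProjMatrix\tollVec + \meanDisturb^\top(\slantProjMatrix\tollVec + M\demandVec) + \demandVec^\top S^{-1}\demandVec$.

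Next I characterize which means are attainable in $\gelbrichBall_\disturbDistrDeviation(\disturbDistrNominal)\cap\M(\disturbRange)$. The trace term in the Gelbrich distance is nonnegative, being a squared Bures distance, so every admissible mean satisfies $\Vert\meanDisturb-\meanDisturbNominal\Vert_2\le\disturbDistrDeviation$. Conversely, for any $\meanDisturb$ in this ball, the translate of $\disturbDistrNominal$ by $\meanDisturb-\meanDisturbNominal$ keeps covariance $\covDisturbNominal$. Its Gelbrich distance is therefore exactly $\Vert\meanDisturb-\meanDisturbNominal\Vert_2\le\disturbDistrDeviation$. Translation also preserves the spread about the mean, so the translate lies in $\M(\disturbRange)$ provided $\disturbDistrNominal\in\M(\disturbRange)$. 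I take this as implicit in the setup, since otherwise the feasible set could be empty at $\disturbDistrDeviation=0$. This step is the main obstacle: I must argue that the projection of the feasible distribution set onto means is exactly the closed $\disturbDistrDeviation$-ball. I would handle it by the translation construction, flagging the needed hypothesis on $\disturbDistrNominal$.

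With this, the inner maximum becomes $\max_{\Vert\meanDisturb-\meanDisturbNominal\Vert_2\le\disturbDistrDeviation}\meanDisturb^\top v$ with $v := \slantProjMatrix\tollVec + M\demandVec$. By Cauchy--Schwarz it equals $\meanDisturbNominal^\top v + \disturbDistrDeviation\Vert v\Vert_2$, attained at $\meanDisturb = \meanDisturbNominal + \disturbDistrDeviation v/\Vert v\Vert_2$, which is \eqref{Eqn: Worst-case Disturbance Distribution Shift}. If $v=0$, every admissible mean is a maximizer. The outer problem therefore minimizes
\begin{align*}
\disturbDistrDeviation\Vert\slantProjMatrix\tollVec + M\demandVec\Vert_2 + \tollVec^\top\slantProjMatrix\tollVec + \meanDisturbNominal^\top\slantProjMatrix\tollVec + \big(\meanDisturbNominal^\top M\demandVec + \demandVec^\top S^{-1}\demandVec\big)
\end{align*}
over $\tollVec \in \tollSet(\disturbDistrDeviation,\disturbRange)$. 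The bracketed constant does not depend on $\tollVec$, so dropping it leaves the argmin unchanged. This gives \eqref{Eqn: Convex Program, Optimal Toll for Distributional Robustness}.

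Finally, convexity follows term by term. The first term is a norm of an affine map. The second is a PSD quadratic by Prop.~\ref{Prop: KKT matrix with B and R is invertible, and has psd upper left block}, and the third is linear. The feasible set $\tollSet(\disturbDistrDeviation,\disturbRange)$ is a polyhedron, nonempty for $\disturbDistrDeviation\in[0,\disturbDistrDeviationMax]$ by Lemma~\ref{Lemma: Max Distributional Deviation}.
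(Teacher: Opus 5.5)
Your proposal is correct and follows the paper's proof. Both write $\latencyWorstEq(\disturbVec,\tollVec)$ as affine in $\disturbVec$ with slope $\slantProjMatrix\tollVec + \latencySlopeMatrix^{-1}\routingMatrix^\top(\routingMatrix\latencySlopeMatrix^{-1}\routingMatrix^\top)^{-1}\demandVec$, reduce the Gelbrich-ball constraint to $\Vert\meanDisturb-\meanDisturbNominal\Vert_2\le\disturbDistrDeviation$, and conclude by Cauchy--Schwarz. The only real difference is how the mean-ball reduction is obtained: the paper uses Olkin's SDP characterization (Prop.~\ref{Prop: Linear algebra identity for Gelbrich Hull characterization}), with an auxiliary $C$ and the $[I\;\;I]$ sandwich to get $\tr(\covDisturb+\covDisturbNominal-2C)\ge 0$, whereas you invoke nonnegativity of the Bures term directly and attain the worst case by translating $\disturbDistrNominal$. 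Your route has the merit of making explicit the hypothesis $\disturbDistrNominal\in\M(\disturbRange)$, which the paper uses silently. Your constant $+\demandVec^\top(\routingMatrix\latencySlopeMatrix^{-1}\routingMatrix^\top)^{-1}\demandVec$ also has the correct sign; the paper's $q_0$ has it negative, which does not affect the argmin.
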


\begin{remark}
Enforcing additional constraints over the optimization in \eqref{Eqn: Convex Program, Optimal Toll for Distributional Robustness} can compel $\tollVec$ to be within a pre-specified range, at the cost of increasing system-wide latency. 
For instance,
as noted in \cite{ChiuJalotaPavone2026CreditvsDiscount}, the toll designer could implement a second-best tolling mechanism by 
enforcing $\toll_e = 0$
on a pre-specified subset of edges $\edges_{\tollVec} \subset \edges$, to ensure that non-zero tolls are only deployed on edges $e \in \edges \backslash \edges_{\tollVec}$.
\end{remark}

The proof of Thm. \ref{Thm: Optimal Toll for Distributional Robustness} requires 
\cite[Lemma 1, Thm. 4]{Olkin1982TheDistanceBetweenTwoRandomVectorswithGivenDispersionMatrices}, stated below, to characterize latency-maximizing shifts in the disturbance distribution $\disturbDistr$ in the inner optimization of \eqref{Eqn: Problem Statement}.

\begin{proposition}(\cite[Lemma 1, Thm. 4]{Olkin1982TheDistanceBetweenTwoRandomVectorswithGivenDispersionMatrices})
\label{Prop: Linear algebra identity for Gelbrich Hull characterization}
Given $M, \tilde M \in \R^{n \times n}$ with $M, \tilde M \succeq 0$:
\begin{alignat}{2} \nonumber
    \tr\big((\covDisturb^{1/2} \covDisturbNominal \covDisturb^{1/2})^{1/2} \big) = &\max_{C \in \R^{|\edges| \times |\edges|}} \hspace{5mm} &&\tr(C) \\ \nonumber
    &\hspace{0.5cm} \text{s.t.}  &&\begin{bmatrix}
        \covDisturb & C \\
        C^\top & \covDisturbNominal
    \end{bmatrix} \succeq 0.
\end{alignat}
\end{proposition}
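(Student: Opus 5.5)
We read the statement with $M = \covDisturb$ and $\tilde M = \covDisturbNominal$; these are the matrices that actually appear in the display. We first assume both are positive definite and remove this at the end. The approach is to parametrize all feasible $C$ as a contraction sandwiched between square roots, reduce the problem to maximizing a trace over contractions, and identify that maximum as a nuclear norm.

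\emph{Step 1 (parametrization).} For $\covDisturb, \covDisturbNominal \succ 0$, the block matrix $\begin{bmatrix} \covDisturb & C \\ C^\top & \covDisturbNominal \end{bmatrix}$ is congruent, via $\diag\{\covDisturb^{-1/2}, \covDisturbNominal^{-1/2}\}$, to $\begin{bmatrix} I & K \\ K^\top & I \end{bmatrix}$ with $K := \covDisturb^{-1/2} C \covDisturbNominal^{-1/2}$. By a Schur complement, the latter is positive semidefinite if and only if $I - K^\top K \succeq 0$, i.e., $\Vert K \Vert_2 \leq 1$. Hence the feasible set is exactly $\{ \covDisturb^{1/2} K \covDisturbNominal^{1/2} : \Vert K \Vert_2 \leq 1 \}$.

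\emph{Step 2 (reduction to a nuclear norm).} By cyclicity of the trace, $\tr(C) = \tr(K N)$ with $N := \covDisturbNominal^{1/2} \covDisturb^{1/2}$. By von Neumann's trace inequality, $\max_{\Vert K \Vert_2 \leq 1} \tr(KN) = \sum_i \sigma_i(N)$. Writing the SVD $N = U S V^\top$, the maximum is attained at $K = V U^\top$. The singular values of $N$ are the square roots of the eigenvalues of $N^\top N = \covDisturb^{1/2} \covDisturbNominal \covDisturb^{1/2}$. Therefore $\sum_i \sigma_i(N) = \tr\big( (\covDisturb^{1/2} \covDisturbNominal \covDisturb^{1/2})^{1/2} \big)$, which is the claimed identity. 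The maximizer is $C^\star = \covDisturb^{1/2} V U^\top \covDisturbNominal^{1/2}$.

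\emph{Step 3 (singular case).} We expect this to be the main obstacle, because the congruence in Step 1 breaks down when $\covDisturb$ or $\covDisturbNominal$ is singular. We would first try a perturbation argument. Replace $\covDisturb, \covDisturbNominal$ by $\covDisturb + tI, \covDisturbNominal + tI$ for $t > 0$, and let $v(t)$ be the resulting optimal value.
\begin{itemize}
\item The feasible set at level $t$ contains the feasible set at level $0$, so $v(t) \geq v(0)$.
\item The feasible sets are uniformly bounded for $t \in [0,1]$: positive semidefiniteness of $2 \times 2$ principal minors gives $C_{ij}^2 \leq (\covDisturb_{ii} + t)(\covDisturbNominal_{jj} + t)$.
\item Take maximizers $C_t$ and pass to a convergent subsequence as $t \to 0$. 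Closedness of the positive semidefinite cone makes the limit feasible at $t = 0$, so $\limsup_{t \to 0} v(t) \leq v(0)$.
\end{itemize}
Hence $v(0) = \lim_{t \to 0} v(t)$. On the other side, the right-hand side $\tr\big( ((\covDisturb+tI)^{1/2} (\covDisturbNominal+tI) (\covDisturb+tI)^{1/2})^{1/2} \big)$ is continuous in $t$, because the matrix square root is continuous on the positive semidefinite cone. Equality therefore passes to the limit.

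If the continuity argument proves awkward, the alternative is to work directly with the generalized Schur complement. Positive semidefiniteness of the block matrix forces $\rangespace(C) \subseteq \rangespace(\covDisturb)$ and $C = \covDisturb^{1/2} K \covDisturbNominal^{1/2}$ for some contraction $K$, using pseudo-inverse square roots. Steps 1 and 2 then go through unchanged.
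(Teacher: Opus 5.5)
Your proof is correct, and it takes a genuinely different route from the paper, which gives no argument of its own and simply imports the identity from the cited Olkin result (Lemma 1, Thm.~4). Your argument is self-contained. The congruence by $\diag\{\Sigma^{-1/2},\hat\Sigma^{-1/2}\}$ turns the LMI into $\Vert K\Vert_2\le 1$. Von Neumann's inequality then identifies the optimum as the nuclear norm of $N=\hat\Sigma^{1/2}\Sigma^{1/2}$, i.e.\ $\tr\big((\Sigma^{1/2}\hat\Sigma\Sigma^{1/2})^{1/2}\big)$. Monotonicity of the feasible sets, compactness, and continuity of the PSD square root correctly extend the identity to singular $\Sigma,\hat\Sigma$, which the statement explicitly allows.

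Step 3 can be shortened, because only the upper bound needs the perturbation. The matrix $C^\star=\Sigma^{1/2}VU^\top\hat\Sigma^{1/2}$ is feasible for every PSD pair via the factorization $\diag\{\Sigma^{1/2},\hat\Sigma^{1/2}\}\left[\begin{smallmatrix} I & K\\ K^\top & I\end{smallmatrix}\right]\diag\{\Sigma^{1/2},\hat\Sigma^{1/2}\}$. It also has trace $\tr(S)$ with no invertibility assumption.

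What your route buys over the citation is this explicit maximizer, and it matters downstream. In the proof of Thm.~\ref{Thm: Optimal Toll for Distributional Robustness}, the paper uses $C:=(\Sigma^{1/2}\hat\Sigma\Sigma^{1/2})^{1/2}$ as the feasible witness. That matrix has the optimal trace but in general violates the LMI. For $\Sigma\succ 0$, its Schur complement is $\hat\Sigma-C\Sigma^{-1}C=X^\top X-XX^\top$ with $X=C\Sigma^{-1/2}$. This is a traceless matrix, so it is PSD only when $X$ is normal. For example, $\Sigma=\diag\{4,1\}$ and $\hat\Sigma=\left[\begin{smallmatrix}1&1\\1&2\end{smallmatrix}\right]$ give the indefinite matrix $\frac{1}{10}\left[\begin{smallmatrix}-3&-1\\-1&3\end{smallmatrix}\right]$. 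Your $C^\star$, which equals $\Sigma^{1/2}(\Sigma^{1/2}\hat\Sigma\Sigma^{1/2})^{1/2}\Sigma^{-1/2}$ when $\Sigma\succ 0$, is the correct witness, and with it that step of the paper goes through.
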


\begin{proof}(Proof of Thm. \ref{Thm: Optimal Toll for Distributional Robustness})
From \eqref{Eqn: Problem Statement} in Thm. \ref{Thm: (g) Characterization of Equilibrium System Latency}:
\begin{align} 
\nonumber
    \latencyWorstEq(\disturbVec, \tollVec) &=
    \begin{bmatrix}
        \tollVec^\top & \demandVec^\top
    \end{bmatrix}
    \begin{bmatrix}
        \latencySlopeMatrix & \routingMatrix^\top \\
        \routingMatrix & 0
    \end{bmatrix}^{-1}
    \begin{bmatrix}
        \disturbVec + \tollVec \\ -\demandVec 
    \end{bmatrix}.
\end{align}
First, for ease of notation we separate the terms in \eqref{Eqn: Problem Statement} that depend on $\disturbVec$ from those which do not. Concretely, we define $\latencyWorstSlopeWrtDisturb(\tollVec) \in \R^{|\edges|}$ and $\latencyWorstOffsetWrtDisturb(\tollVec) \in \R$ as follows:
\begin{align} \nonumber
    \latencyWorstSlopeWrtDisturb(\tollVec) &:= \begin{bmatrix}
        I & O
    \end{bmatrix} 
    \begin{bmatrix}
        \latencySlopeMatrix & \routingMatrix^\top \\
        \routingMatrix & O
    \end{bmatrix}^{-1} 
    \begin{bmatrix}
        \tollVec \\ \demandVec
    \end{bmatrix} \\ \label{Eqn: Equilibrium Latency, Slope term w.r.t. alpha}
    &= \slantProjMatrix \tollVec + \latencySlopeMatrix^{-1} \routingMatrix^\top (\routingMatrix \latencySlopeMatrix^{-1} \routingMatrix^\top)^{-1} \demandVec, \\ \nonumber
    \latencyWorstOffsetWrtDisturb(\tollVec) &:= \begin{bmatrix}
        \tollVec^\top & \demandVec^\top
    \end{bmatrix}
    \begin{bmatrix}
        \latencySlopeMatrix & \routingMatrix^\top \\
        \routingMatrix & 0
    \end{bmatrix}^{-1}
    \begin{bmatrix}
        \tollVec \\ -\demandVec 
    \end{bmatrix} \\ 
    \label{Eqn: Equilibrium Latency, Offset term w.r.t. alpha}
    &= \tollVec^\top \slantProjMatrix \tollVec - \demandVec^\top (\routingMatrix \latencySlopeMatrix^{-1} \routingMatrix^\top)^{-1} \demandVec.
\end{align}
Then, by applying \eqref{Eqn: Inverse of B, R top; R, O} to Prop. \ref{Prop: KKT matrix with B and R is invertible, and has psd upper left block}, we find that:
\begin{align} \nonumber
    \latencyWorstEq(\disturbVec, \tollVec) = \latencyWorstSlopeWrtDisturb(\tollVec)^\top \disturbVec + \latencyWorstOffsetWrtDisturb(\tollVec).
\end{align}

The inner maximization in computing the set of distributionally robust optimal tolls, given by \eqref{Eqn: Problem Statement}, can then be computed as follows, with key steps detailed below:
\begin{align} \nonumber
    &\hspace{3mm} \max_{\disturbDistr \in \gelbrichBall\disturbDistrDeviation(\disturbDistrNominal) \cap \M(\disturbRange)} \E_{\disturbVec \sim \disturbDistr}[\latencyWorstEq(\disturbVec, \tollVec)] \\ \nonumber
    &= \max_{\disturbDistr \in \gelbrichBall\disturbDistrDeviation(\disturbDistrNominal) \cap \M(\disturbRange)} \E_{\disturbVec \sim \disturbDistr}\big[ \latencyWorstSlopeWrtDisturb(\tollVec)^\top \disturbVec + \latencyWorstOffsetWrtDisturb(\tollVec) \big] \\ \label{Eqn: Introducing C matrix}
    &= \max_{\substack{\meanDisturb \in \R^{|\edges|} \\ \covDisturb \in \R^{|\edges| \times |\edges|} }}. \hspace{5mm} \latencyWorstSlopeWrtDisturb(\tollVec)^\top \meanDisturb + \latencyWorstOffsetWrtDisturb(\tollVec) \\ \nonumber
    &\hspace{9mm} \text{s.t.} \hspace{3mm} \covDisturb \succeq 0, \\ \nonumber
    & \hspace{1.6cm} \Vert \meanDisturb - \meanDisturbNominal \Vert_2^2 + \tr(\covDisturb + \covDisturbNominal - 2(\covDisturb^{1/2} \covDisturbNominal \covDisturb^{1/2})^{1/2} ) \leq \disturbDistrDeviation^2 \\ 
    \label{Eqn: Applying Gelbrich Hull Identity}
    &= \max_{\substack{\meanDisturb \in \R^{|\edges|} \\ \covDisturb, C \in \R^{|\edges| \times |\edges|} }}. \hspace{5mm} \latencyWorstSlopeWrtDisturb(\tollVec)^\top \meanDisturb + \latencyWorstOffsetWrtDisturb(\tollVec) \\ \nonumber
    &\hspace{9mm} \text{s.t.} \hspace{7.5mm} \covDisturb \succeq 0, \\ \nonumber
    & \hspace{2cm} \Vert \meanDisturb - \meanDisturbNominal \Vert_2^2 + \tr(\covDisturb + \covDisturbNominal - 2C) \leq \disturbDistrDeviation^2, \\ \nonumber
    & \hspace{2cm} \begin{bmatrix}
        \covDisturb & C \\ 
        C^\top & \covDisturbNominal
    \end{bmatrix} \succeq 0, \\
    \label{Eqn: Simplifying Outcome of Applying Gelbrich Hull Identity}
    &= \max_{\meanDisturb \in \R^{|\edges|}}. \hspace{5mm} \latencyWorstSlopeWrtDisturb(\tollVec)^\top \meanDisturb + \latencyWorstOffsetWrtDisturb(\tollVec) \\ \nonumber
    &\hspace{6mm} \text{s.t.} \hspace{1cm} \Vert \meanDisturb - \meanDisturbNominal \Vert_2 \leq \disturbDistrDeviation \\
    \label{Eqn: Applying Robust Linear Programming}
    &= \disturbDistrDeviation \Vert \latencyWorstSlopeWrtDisturb(\tollVec) \Vert_2 + \latencyWorstSlopeWrtDisturb(\tollVec)^\top \meanDisturbNominal + \latencyWorstOffsetWrtDisturb(\tollVec).
\end{align}
To prove $\eqref{Eqn: Introducing C matrix} = \eqref{Eqn: Applying Gelbrich Hull Identity}$, it suffices to show that the feasible set of $(\meanDisturb, \covDisturb)$ for \eqref{Eqn: Introducing C matrix} equals the set of $(\meanDisturb, \covDisturb)$ for which there exists some $C \in \R^{|\edges| \times |\edges|}$ such that $(\meanDisturb, \covDisturb, C)$ is feasible for \eqref{Eqn: Applying Gelbrich Hull Identity}. 
From Prop. \ref{Prop: Linear algebra identity for Gelbrich Hull characterization},
given any $(\meanDisturb, \covDisturb)$ that is feasible for \eqref{Eqn: Introducing C matrix}, the tuple $(\meanDisturb, \covDisturb, C)$ with $C := (\covDisturb^{1/2} \covDisturbNominal \covDisturb^{1/2})^{1/2}$ is feasible for \eqref{Eqn: Applying Gelbrich Hull Identity}. Conversely, given any $(\meanDisturb, \covDisturb, C)$ that is feasible for \eqref{Eqn: Applying Gelbrich Hull Identity}:
\begin{align} \nonumber
    &\Vert \meanDisturb - \meanDisturbNominal \Vert_2^2 + \tr(\covDisturb + \covDisturbNominal - 2(\covDisturb^{1/2} \covDisturbNominal \covDisturb^{1/2})^{1/2}) \\ \nonumber
    \leq &\Vert \meanDisturb - \meanDisturbNominal \Vert_2^2 + \tr(\covDisturb + \covDisturbNominal - 2C) \\ \nonumber
    \leq & 0,
\end{align}
thus confirming that $(\meanDisturb, \covDisturb)$ is feasible for \eqref{Eqn: Introducing C matrix}. Therefore, $\eqref{Eqn: Introducing C matrix} = \eqref{Eqn: Applying Gelbrich Hull Identity}$.
Meanwhile, $\eqref{Eqn: Applying Gelbrich Hull Identity} \geq \eqref{Eqn: Simplifying Outcome of Applying Gelbrich Hull Identity}$ follows by simply taking $\covDisturb = C = \covDisturbNominal$, while $\eqref{Eqn: Applying Gelbrich Hull Identity} \leq \eqref{Eqn: Simplifying Outcome of Applying Gelbrich Hull Identity}$ follows by observing that, given any $\meanDisturb \in \R^{|\edges|}$ for which there exist $\covDisturb, C \in \R^{|\edges| \times |\edges|}$ such that $(\meanDisturb, \covDisturb, C)$ is feasible for \eqref{Eqn: Applying Gelbrich Hull Identity}:
\begin{align} \nonumber
    0 &\leq \tr \left( \begin{bmatrix}
        I & I 
    \end{bmatrix}
    \begin{bmatrix}
        \covDisturb & C \\ 
        C^\top & \covDisturbNominal
    \end{bmatrix}
    \begin{bmatrix}
        I \\ I 
    \end{bmatrix} \right) \\ \nonumber
    &= \tr(\covDisturb + \covDisturbNominal - 2C),
\end{align}
so $\Vert \meanDisturb - \meanDisturbNominal \Vert_2^2 \leq \disturbDistrDeviation^2 - \tr(\covDisturb + \covDisturbNominal - 2C) \leq \disturbDistrDeviation^2$, i.e., $\meanDisturb$ is feasible for \eqref{Eqn: Simplifying Outcome of Applying Gelbrich Hull Identity}.
Finally, $\eqref{Eqn: Simplifying Outcome of Applying Gelbrich Hull Identity} \leq \eqref{Eqn: Applying Robust Linear Programming}$ follows by applying the Cauchy-Schwarz inequality to show that, for any $\meanDisturb \in \R^{|\edges|}$ satisfying $\Vert \meanDisturb - \meanDisturbNominal \Vert_2 \leq \disturbDistrDeviation$:
\begin{align} \nonumber
    \latencyWorstSlopeWrtDisturb(\tollVec)^\top \meanDisturb + \latencyWorstOffsetWrtDisturb(\tollVec) &= \latencyWorstSlopeWrtDisturb(\tollVec)^\top (\meanDisturb - \meanDisturbNominal) + \latencyWorstSlopeWrtDisturb(\tollVec)^\top \meanDisturbNominal + \latencyWorstOffsetWrtDisturb(\tollVec) \\ \nonumber
    &\leq \Vert \meanDisturb - \meanDisturbNominal \Vert_2 \Vert \latencyWorstSlopeWrtDisturb(\tollVec) \Vert_2 + \latencyWorstSlopeWrtDisturb(\tollVec)^\top \meanDisturbNominal + \latencyWorstOffsetWrtDisturb(\tollVec) \\ \nonumber
    &\leq \disturbDistrDeviation \Vert \latencyWorstSlopeWrtDisturb(\tollVec) \Vert_2 + \latencyWorstSlopeWrtDisturb(\tollVec)^\top \meanDisturbNominal + \latencyWorstOffsetWrtDisturb(\tollVec),
\end{align}
while $\eqref{Eqn: Simplifying Outcome of Applying Gelbrich Hull Identity} \geq \eqref{Eqn: Applying Robust Linear Programming}$ follows from noting that the above inequalities all hold with equality when $\meanDisturb := \meanDisturbNominal + \frac{\disturbDistrDeviation}{\Vert \latencyWorstSlopeWrtDisturb(\tollVec) \Vert_2} \latencyWorstSlopeWrtDisturb(\tollVec)$, which is feasible for the constraint set of \eqref{Eqn: Simplifying Outcome of Applying Gelbrich Hull Identity}.
Finally, by substituting \eqref{Eqn: Applying Robust Linear Programming} back into \eqref{Eqn: Problem Statement}, we obtain:
\begin{align} \nonumber
    \optTollSet(\disturbDistrDeviation) &:= \min_{\tollVec \in \tollSet(\disturbDistrDeviation, \disturbRange)} \max_{\disturbDistr \in \gelbrichBall\disturbDistrDeviation(\disturbDistrNominal) \cap \M(\disturbRange)} \E_{\disturbVec \sim \disturbDistr}[\latencyWorstEq(\disturbVec, \tollVec)] \\ \label{Eqn: Optimal Toll set, penultimate expression}
    &= \min_{\tollVec \in \tollSet(\disturbDistrDeviation, \disturbRange)} \disturbDistrDeviation \Vert \latencyWorstSlopeWrtDisturb(\tollVec) \Vert_2 + \latencyWorstSlopeWrtDisturb(\tollVec)^\top \meanDisturbNominal + \latencyWorstOffsetWrtDisturb(\tollVec).
\end{align}
Substituting the definitions of $\latencyWorstSlopeWrtDisturb(\tollVec)$ and $\latencyWorstOffsetWrtDisturb(\tollVec)$, in \eqref{Eqn: Equilibrium Latency, Slope term w.r.t. alpha} and \eqref{Eqn: Equilibrium Latency, Offset term w.r.t. alpha}, transforms \eqref{Eqn: Optimal Toll set, penultimate expression} into \eqref{Eqn: Convex Program, Optimal Toll for Distributional Robustness}, concluding the proof.
\end{proof}


\section{Experiments}
\label{sec: Experiments}

We assess the performance of our distributionally robust tolls, relative to nominal tolls that ignore distributional uncertainty, in reducing system-wide latency on simulations of a 2-link Pigou network with the affine edge latency functions $\ell_1(x) = 1.5 x$ and $\ell_2(x) = 0.1 x$. 
We fix $\delta = 0.2$, and define the nominal disturbance distribution $\disturbDistrNominal$ to be 
uniform
over the ball in $\R^2$ with center at $(20, 30)$ and radius $\delta$.

First, we solve \eqref{Eqn: LP for epsilon max} from Lemma \ref{Lemma: Max Distributional Deviation} to compute $\disturbDistrDeviationMax = 39.8$ to be the maximum deviation of the mean of the disturbance distribution $\disturbDistr$ 
that still ensures robust full network utilization at equilibrium.
We define four levels of disturbance mean shift below $\disturbDistrDeviationMax$, collected into the set $S_\disturbDistrDeviation = \{0.0, 10.0, 20.0, 30.0 \}$. 
Then, we simulate the equilibrium system-wide latency induced when the toll designer deploys a distributionally robust toll with \textit{anticipated} disturbance distribution deviation $\hat\disturbDistrDeviation \in S_\disturbDistrDeviation$, while the disturbance distribution \textit{actually} deviates from its nominal prior by $\disturbDistrDeviation \in S_\disturbDistrDeviation$.
(Note that $\disturbDistrDeviation$ and $\hat \disturbDistrDeviation$ are varied independently over $S_\disturbDistrDeviation$, and thus need not be identical).
Concretely, 
for each pair $(\disturbDistrDeviation, \hat \disturbDistrDeviation) \in S_\disturbDistrDeviation \times S_\disturbDistrDeviation$:
\begin{enumerate}
    \item We optimize the objective in \eqref{Eqn: Convex Program, Optimal Toll for Distributional Robustness}, as presented in Thm. \ref{Thm: Optimal Toll for Distributional Robustness} to compute a distributionally optimal toll with the \textit{anticipated} robustness level $\hat\disturbDistrDeviation$, denoted below by $\tollVec^\star(\hat\disturbDistrDeviation)$.
    \item We use \eqref{Eqn: Worst-case Disturbance Distribution Shift} to characterize the worst-case disturbance distribution $\disturbDistr(\disturbDistrDeviation, \hat\disturbDistrDeviation)$, with \textit{actual} deviation $\disturbDistrDeviation$ in its mean, that corresponds to the deployment of $\tollVec^\star(\hat\disturbDistrDeviation)$, as follows:
    \begin{align}
        \E\big[\disturbDistr(\disturbDistrDeviation, \hat \disturbDistrDeviation )\big] = \meanDisturbNominal + \epsilon \cdot \frac{\latencyWorstSlopeWrtDisturb(\tollVec^\star(\hat\disturbDistrDeviation))}{\Vert \latencyWorstSlopeWrtDisturb(\tollVec^\star(\hat\disturbDistrDeviation)) \Vert_2}.
    \end{align}
    We then draw $N = 10000$ i.i.d. disturbance realizations from the distribution $\disturbDistr(\disturbDistrDeviation, \hat\disturbDistrDeviation)$, denoted below by $\{\disturbVec^\paren{n}(\disturbDistrDeviation, \hat\disturbDistrDeviation): n \in [N] \}$.
    \item Finally, we compute the expected system latency equilibrium corresponding to the deployment of the toll $\tollVec^\star(\hat\disturbDistrDeviation)$ and i.i.d. realizations of the worst-case disturbance distribution $\disturbDistr(\disturbDistrDeviation, \hat\disturbDistrDeviation)$, denoted below by $\bar\latencyWorstEq(\disturbDistrDeviation, \hat\disturbDistrDeviation)$:
    \begin{align}
        \bar\latencyWorstEq(\disturbDistrDeviation, \hat\disturbDistrDeviation) := \frac{1}{N} \sum_{n=1}^N \latencyWorstEq(\tollVec^\star(\hat\disturbDistrDeviation), \disturbVec^\paren{n}(\disturbDistrDeviation, \hat\disturbDistrDeviation)).
    \end{align}
\end{enumerate}
Below, we plot all values $\bar\latencyWorstEq(\disturbDistrDeviation, \hat\disturbDistrDeviation)$ of the expected system latency equilibrium, as $\disturbDistrDeviation$ and $\hat\disturbDistrDeviation$ range independently over $S_\disturbDistrDeviation$:
\begin{align*} 
    \begin{array}{|c||c|c|c|c|}
    \hline
    \disturbDistrDeviation \backslash \hat\disturbDistrDeviation & 0.0 & 10.0 & 20.0 & 30.0  \\ \hline \hline
    0.0 & 3859.42 & 3870.66 & 3900.85 & 3945.00 \\ \hline
    10.0 & 4765.95 & 4754.09 & 4764.16 & 4790.35 \\ \hline
    20.0 & 5672.50 & 5637.52 & 5627.32 & 5635.82 \\ \hline
    30.0 & 6579.02 & 6520.88 & 6490.32 & 6481.12 \\ \hline
\end{array}
\end{align*}

The above results indicate that, 
under any level $\disturbDistrDeviation$ of disturbance distribution shift,
the distributionally robust toll $\tollVec^\star(\hat\disturbDistrDeviation)$ designed with 
$\hat\disturbDistrDeviation = \disturbDistrDeviation$ 
outperforms the distributionally robust toll designed with any $\hat\disturbDistrDeviation \ne \disturbDistrDeviation$ in minimizing system-wide latency.
In particular, for each $\disturbDistrDeviation > 0$, the distributionally robust toll $\tollVec^\star(\disturbDistrDeviation)$ outperforms the nominal toll $\tollVec^\star(0)$ in minimizing system-wide latency when the disturbance distribution undergoes a maximum shift of $\disturbDistrDeviation$.
Thus, our numerical results validate our insight that, in minimizing system-wide latency, distributionally robust tolls outperform tolls designed while ignoring stochastic latency fluctuations.




\section{Conclusion and Future Work}
\label{sec: Conclusion and Future Work}

In this work, we present the distributionally robust toll design problem in single-origin single-destination network congestion games.
We
prove that optimal distributionally robust tolls can be computed via convex programming.
Numerically, we illustrate the efficacy of distributionally robust tolls in reducing system-wide latency, compared against tolls designed without considering stochatic latency fluctuations. 
Future work will develop methods to solve or approximate optimal distributionally robust tolls in more general congestion games (e.g., with polynominal latency maps) and disturbance models.
Further, we will perform numerical experiments over more complex network topologies.

\renewcommand*{\bibfont}{\footnotesize}
\printbibliography

@article{li2023adaptive,
  title={Adaptive constraint satisfaction for markov decision process congestion games: Application to transportation networks},
  author={Li, Sarah HQ and Yu, Yue and Miguel, Nicolas I and Calderone, Dan and Ratliff, Lillian J and A{\c{c}}{\i}kme{\c{s}}e, Beh{\c{c}}et},
  journal={Automatica},
  volume={151},
  pages={110879},
  year={2023},
  publisher={Elsevier}
}

@book{Sandholm2010PopulationGamesAndEvolutionaryDynamics,
  title = {{Population Games And Evolutionary Dynamics}},
  author={William H. Sandholm},
  publisher={{Economic Learning and Social Evolution}},
  year={2010}
}

@article{Ahipasaoglu2019DistributionallyRobustMTE,
author = {Ahipasaoglu, Selin and Arikan, Ugur and Natarajan, Karthik},
year = {2019},
month = {10},
pages = {1546-1562},
title = {{Distributionally Robust Markovian Traffic Equilibrium}},
volume = {53:6},
journal = {Transportation Science},
}

@article{Roughgarden2010AlgorithmicGameTheory,
  title={{Algorithmic Game Theory}},
  author={Roughgarden, Tim},
  journal={Communications of the ACM},
  volume={53},
  number={7},
  pages={78--86},
  year={2010},
  publisher={ACM New York, NY, USA}
}

@article{Yang1998PrincipleOfMarginalCostPricing,
  title={{Principle of Marginal-Cost Pricing: How Does it Work in a General Road Network?}},
  author={Yang, Hai and Huang, Hai-Jun},
  journal={Transportation Research Part A: Policy and Practice},
  volume={32},
  number={1},
  pages={45--54},
  year={1998},
  publisher={Elsevier}
}

@article{Paccagnan2019IncentivizingEfficientUse,
  title={{Incentivizing Efficient Use of Shared infrastructure: Optimal Tolls in Congestion Games}},
  author={Paccagnan, Dario and Chandan, Rahul and Ferguson, Bryce L and Marden, Jason R},
  journal={arXiv preprint arXiv:1911.09806},
  year={2019},
  publisher={arXiv}
}

@article{ChiuJalotaPavone2026CreditvsDiscount,
      title={{Credit-Based vs. Discount-Based Congestion Pricing: A Comparison Study}}, 
      author={Chih-Yuan Chiu and Devansh Jalota and Marco Pavone},
      journal={arXiv preprint arXiv:2602.11077},
      year={2026},
}

@article{Paccagnan2021OptimalTaxesinAtomicCongestionGames,
author = {Paccagnan, Dario and Chandan, Rahul and Ferguson, Bryce L. and Marden, Jason R.},
title = {{Optimal Taxes in Atomic Congestion Games}},
year = {2021},
issue_date = {September 2021},
publisher = {Association for Computing Machinery},
address = {New York, NY, USA},
volume = {9},
number = {3},
issn = {2167-8375},
journal = {ACM Trans. Econ. Comput.},
month = {8},
articleno = {19},
numpages = {33},
}

@article{Rosenthal1973ClassofGames,
author = {Rosenthal, Robert W.},
title = {{A Class of Games Possessing Pure-strategy Nash Equilibria}},
year = {1973},
publisher = {Physica-Verlag GmbH},
address = {DEU},
volume = {2},
number = {1},
journal = {Int. J. Game Theory},
pages = {65–67},
numpages = {3}
}

@article{ChiuFerguson2025RobustnessOfIncentiveMechanisms,
    title = {{Robustness of Incentive Mechanisms Against System Misspecification in Congestion Games}},
    volume = {9},
    journal = {IEEE Control Systems Letters},
    author = {Chiu, Chih-Yuan and Ferguson, Bryce L.},
    year = {2025},
    pages = {276--281},
}

@inproceedings{Kuhn2019WassersteinDistributionallyRobustOptimizationTheoryApplicationsML,
author = {Daniel Kuhn and Peyman Mohajerin Esfahani and Viet Anh Nguyen and Soroosh Shafieezadeh-Abadeh},
title = {{Wasserstein Distributionally Robust Optimization: Theory and Applications in Machine Learning}},
booktitle = {Operations Research \& Management Science in the Age of Analytics},
chapter = {6},
pages = {130-166},
year = {2019}
}

@article{suzdaltsev_distributionally_2022,
    title = {{Distributionally Robust Pricing in Independent Private Value Auctions}},
    volume = {206},
    journal = {Journal of Economic Theory},
    author = {Suzdaltsev, Alex},
    month = {12},
    year = {2022},
    pages = {105555},
}

@article{Olkin1982TheDistanceBetweenTwoRandomVectorswithGivenDispersionMatrices,
title = {{The Distance Between Two Random Vectors with Given Dispersion Matrices}},
journal = {Linear Algebra and its Applications},
author = {I. Olkin},
volume = {48},
pages = {257-263},
year = {1982},
}

@inproceedings{Taskesen2023DistributionallyRobustLQR,
 author = {Taskesen, Bahar and Iancu, Dan and Ko\c{c}yi\u{g}it, \c{C}a\u{g}\i l and Kuhn, Daniel},
 booktitle = {Advances in Neural Information Processing Systems},
 pages = {18613--18632},
 title = {{Distributionally Robust Linear Quadratic Control}},
 volume = {36},
 year = {2023}
}

@ARTICLE{BrownMarden2017RobustnessofMarginalCostTaxesinAffineCongestionGames,
  author={Brown, Philip N. and Marden, Jason R.},
  journal={IEEE Transactions on Automatic Control}, 
  title={{The Robustness of Marginal-Cost Taxes in Affine Congestion Games}}, 
  year={2017},
  volume={62},
  number={8},
  pages={3999-4004},
}

@INPROCEEDINGS{YueFerguson2021IncentiveDesignforCongestionGameswithUnincentivizableUsers,
  author={Yue, Yixiao and Ferguson, Bryce L. and Marden, Jason R.},
  booktitle={2021 60th IEEE Conference on Decision and Control (CDC)}, 
  title={{Incentive Design for Congestion Games with Unincentivizable Users}}, 
  year={2021},
  volume={},
  number={},
  pages={4515-4520},
}

@InProceedings{Gollapudi2023OnlineLearningforTrafficNavigationinCongestedNetworks,
  title = 	 {{Online Learning for Traffic Navigation in Congested Networks}},
  author =       {Gollapudi, Sreenivas and Kollias, Kostas and Maheshwari, Chinmay and Wu, Manxi},
  booktitle = 	 {Proceedings of The 34th International Conference on Algorithmic Learning Theory},
  pages = 	 {642--662},
  year = 	 {2023},
  volume = 	 {201},
}

@article{dokka_robust_2017,
  title={{Robust Toll Pricing: A Novel Approach}},
  author={Dokka, Trivikram and Zemkoho, Alain B. and Gupta, Sonali Sen and Nobibon, Fabrice T.},
  journal={arXiv preprint arXiv:1712.01570},
  year={2017}
}

@article{marecek_distributionally_2017,
  title={{Distributionally Robust Optimisation in Congestion Control}},
  author={Marecek, Jakub and Shorten, Robert and Yu, Jia Yuan},
  journal={arXiv preprint arXiv:1705.09152},
  year={2017}
}

@article{Ratliff2018,
    title = {{A Perspective on Incentive Design: Challenges and Opportunities}},
    year = {2018},
    journal = {Annual Review of Control, Robotics, and Autonomous Systems},
    author = {Ratliff, Lillian J and Dong, Roy and Sekar, Shreyas and Fiez, Tanner},
    number = {1},
    pages = {1--34},
    volume = {2}
}

@article{Niyato2008,
    title = {Competitive {Pricing} for {Spectrum} {Sharing} in {Cognitive} {Radio} {Networks}: {Dynamic} {Game}, {Inefficiency} of {Nash} {Equilibrium}, and {Collusion}},
    volume = {26},
    number = {1},
    journal = {IEEE Journal on Selected Areas in Communications},
    author = {Niyato, Dusit and Hossain, Ekram},
    year = {2008},
    pages = {192--202},
}

@article{Tang2019,
    title = {{Game Theory-based Interactive Demand Side Management Responding to Dynamic Pricing in Price-based Demand Response of Smart Grids}},
    volume = {250},
    journal = {Applied Energy},
    author = {Tang, Rui and Wang, Shengwei and Li, Hangxin},
    year = {2019},
    pages = {118--130},
}

@report{Chiu2026TechnicalNoteTechnicalNoteGraphTheoreticPropertiesofTrafficNetworks,
  title={Technical Note: Graph-Theoretic Properties of Traffic Networks},
  author={Chih-Yuan Chiu},
  year={2026},
  url={https://drive.google.com/file/d/1yaO-KmuoEbXu524MrcNfCSZC-vHd5PuQ/view?usp=sharing}
}

@article{Kim2024EstimateThenPredict,
title = {{Estimate Then Predict: Convex Formulation for Travel Demand Forecasting}},
journal = {(submitted to) Transportation Research Part B: Methodological},
author = {Youngseo Kim and Gioele Zardini and Samitha Samaranayake and Soroosh Shafiee},
year = {2024},
}



\end{document}